\pgfplotsset{compat=newest} 
\pgfplotsset{plot coordinates/math parser=false}
\newlength\fwidth
\newcommand{\prob}{\mathbf{P}}
\newcommand{\E}{\mathbf{E}}
\newcommand{\const}{\mathrm{const}}
\newcommand{\Aht}{A^{(\mathrm{hp})}}
\newcommand{\Ahtb}[1][\beta]{A^{{(\mathrm{hp},#1)}}}
\newcommand{\R}{\mathbb{R}}
\newcommand{\ie}{{\it i.e.\/}}
\newcommand{\eg}{{\it e.g.\/}}
\newcommand{\TheTitle}{A metric on directed graphs and Markov chains based on hitting probabilities}
\newcommand{\TheAuthors}{Z. Boyd, {\it et al.\/}}
\newcommand{\TheShortTitle}{Hitting probability metric}
\headers{\TheShortTitle}{\TheAuthors} 
\newcommand{\UNCMath}{Department of Mathematics, Univ.~North Carolina at Chapel Hill}
\newcommand{\naff}{Department of Statistics and Operations Research, Univ.~North Carolina at Chapel Hill}
\newcommand{\paff}{Carolina Center for Interdisciplinary Applied Mathematics, Department of Mathematics and Department of Applied Physical Sciences, Univ.~North Carolina at Chapel Hill}
\newcommand{\jaff}{Courant Institute of Mathematical Sciences, New York University}
\newcommand{\baff}{Department of Mathematics, University of Utah}
\title{\TheTitle
  \thanks{The authors gratefully acknowledge helpful conversations with Erik Thiede, Charles Matthews, Jonathan Mattingly, Brian Van Koten, and Benjamin Z. Webb. We also thank the anonymous reviewers for the article for various helpful suggestions that improved the draft substantially.  Also, the referees for the 2020 SIAM Workshop on Network Science (NS20) gave some helpful suggestions on early versions of the work. ZMB and PJM were supported by the James S. McDonnell Foundation 21st Century Science Initiative---Complex Systems Scholar Award grant \#220020315, with additional support from the Army Research Office (MURI award W911NF-18-1-0244).  JLM was supported in part by NSF CAREER Grant DMS 13-52353 and NSF Grant DMS 19--09035 and thanks Duke University and MSRI for hosting him during part of the completion of this work. BO acknowledges partial support from NSF DMS 16-19755 and 17-52202. JW was supported by the Advanced Scientific Computing Research Program within the DOE Office of Science through award DE-SC0020427. The content is solely the responsibility of the authors and does not necessarily reflect the views of any of the funding agencies.}
}
\author{
  Zachary M.\ Boyd\thanks{\UNCMath\ (\email{zachboyd@email.unc.edu})} \and 
  Nicolas Fraiman\thanks{\naff} \and
  Jeremy L.\ Marzuola\thanks{\UNCMath} \and 
  Peter~J.~Mucha\thanks{\paff} \and 
  Braxton Osting\thanks{\baff} \and
  Jonathan Weare\thanks{\jaff}
}
\date{\today}
\numberwithin{theorem}{section} 
\begin{document}

\maketitle

\begin{keywords}
  directed graph, metric space, Markov chain, hitting time, embedding
\end{keywords}


\begin{abstract} 
The shortest-path, commute time, and diffusion distances on undirected graphs have been widely employed in applications such as dimensionality reduction, link prediction, and trip planning. Increasingly, there is interest in using asymmetric structure of data derived from Markov chains and directed graphs, but few metrics are specifically adapted to this task. 
We introduce a metric on the state space of any ergodic, finite-state, time-homogeneous Markov chain and, in particular, on any Markov chain derived from a directed graph. 
Our construction is based on hitting probabilities, with nearness in the metric space related to the transfer of random walkers from one node to another at stationarity. 
Notably, our metric is insensitive to shortest and average walk distances, thus giving new information compared to existing metrics.
We use possible degeneracies in the metric to develop an interesting structural theory of directed graphs and explore a related quotienting procedure.
Our metric can be computed in $O(n^3)$ time, where $n$ is the number of states, and in examples we scale up to $n=10,000$ nodes and $\approx 38M$ edges on a desktop computer.
In several examples, we explore the nature of the metric, compare it to alternative methods, and demonstrate its utility for weak recovery of community structure in dense graphs, visualization, structure recovering, dynamics exploration, and multiscale cluster detection.
\end{abstract}

\section{Introduction}

\subsection{Motivation}
Many finite spaces can be endowed with meaningful metrics. For undirected graphs, the geodesic (shortest path), commute time (effective resistance), and diffusion distance~\cite{lafon04diffusion,coifman05geometric,coifman06diffusion} metrics are widely applied~\cite{coifman05geometric,Liben_Nowell_2003,abraham2010highway}. 
The first two can be naively generalized to directed graphs by summing shortest/average walk length in each direction, whereas the third is specifically undirected.
We know of only one graph metric specifically designed for directed graphs, namely the generalized effective resistance distance developed in~\cite{Young_2016b,Young_2016a}.
Overlaying a metric onto a directed structure is a challenge since, by definition, the metric is symmetric.

A related problem is finding metrics on the state space of a finite-state, discrete time Markov chain. In this case, there is also limited prior work, consisting of mean commute time~\cite{rozinas,chebotarev-2020,choi19resistance} and a constant-curvature metric~\cite{vollering2018}.

Metrics fit into the broader category of dissimilarity measures, with the decision whether to impose all metric axioms being application dependent. When a metric is used, this additional structure can enable various algorithmic accelerations, improved guarantees, and useful inductive biases~\cite{elkan2003kmeans,moore00anchors,hamerly10kmeans,boytsov13prune,pitis2020bias}. Furthermore, the metric structure is a key ingredient in proofs of convergence, consistency, and stability. While mostly settled for undirected graphs~\cite{Osting_2017,singer2012vector,singer2017spectral,trillos2018variational,trillos2016consistency}, the development of such theories for directed graphs (digraphs) and Markov chains is an open research problem.  The first positive result for digraphs appeared recently~\cite{Yuan2020}.

In the present work, we introduce and analyze a new metric for digraphs and Markov chains based on the \emph{hitting probability} from one node to another, by which we mean the probability that a random walker starting at one node will reach the other node before returning to its starting node. By correctly combining these probabilities with the invariant distribution of an irreducible Markov chain, a metric can be constructed. This metric differs from other metrics by being insensitive to walk length, thus measuring information that is, in a sense, orthogonal to commute time, as illustrated in examples. In the special case of undirected graphs and with the scale parameter $\beta=1$ (defined below), the hitting probabilities metric is actually the logarithm of effective resistance/commute time (plus a constant), a striking fact proven in~\cite{doyle2000random}, section 1.3.4. For other values of the scale parameter, the hitting probabilities metric is a new addition to the limited catalogue of undirected graph metrics. 
We illustrate the utility of our metric in several examples, both analytical and numerical, related to graph symmetrization, clustering, structure detection, data exploration, and geometry detection.

\subsection{Our contributions}

Let ${(X_t)}_{t\geq 0}$ be a discrete-time Markov chain on the state space $[n] = \{ 1, \ldots , n\}$ with initial distribution $\lambda$ and irreducible transition matrix $P$, \ie, 
\[
  \prob(X_0 = i) = \lambda_i
  \qquad \textrm{and} \qquad  
  \prob(X_{t+1} = j \mid X_t = i) = P_{i,j}\,.
\]
We emphasize that $X$ is not required to be aperiodic.

Let $\phi \in \R_{+}^n$ be the invariant distribution for $P$, \ie, $P^T \phi = \phi$. 
The \emph{hitting time} (starting from a random state distributed like $\lambda$) for a state $i\in [n]$ is the random variable given by 
\[
  \tau_i := \inf\{ t \geq 1 \colon X_t = i \}\,. 
\]
For $i,j \in [n]$, let us define 
\begin{equation}
  Q_{i,j} := \prob_i [\tau_j < \tau_i]\,,
\end{equation}
which denotes the probability that starting from site $i$ (\ie, the subscript on $\prob_i$ is used to indicate that $\lambda = \delta_{i}$) the hitting time of $j$ is less than the time it takes to return to $i$.  We emphasize that we consider $\tau_j < \tau_i$ here for a single walk and take the probability of such an event over all walks starting at $i$ when computing $Q_{i,j}$.  An expression for the \emph{hitting probability matrix}, $Q$, in terms of the transition matrix will be given in~\cref{eq:Q}; see \cref{s:CompMeth} on computational methods.

\begin{lemma}\label{t:KeyIdentity} The following relationship holds\footnote{\Cref{t:KeyIdentity} was previously (and independently) proven in~\cite{chien04link}, in the context of Markov chain perturbation theory applied to the internet. It was possibly known even earlier.} for $i\ne j$:
  \begin{equation}%
    \label{Qpi}
    Q_{i,j} \phi_i = Q_{j,i} \phi_j\,. 
  \end{equation}
\end{lemma}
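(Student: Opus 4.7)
The plan is to compute the expected number of visits to $j$ during one excursion of the walker from $i$ back to $i$ in two different ways: once via Kac's formula for the invariant distribution, and once via the strong Markov property at $j$. Equating the two expressions yields~\cref{Qpi}.

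First, fix distinct states $i,j \in [n]$ and let $T_i^+ := \inf\{t\ge 1 \colon X_t = i\}$ be the first return time to $i$. Let
\[
  N_j^{(i)} := \sum_{t=0}^{T_i^+ - 1} \mathbf{1}\{X_t = j\}
\]
count the visits to $j$ during a single excursion from $i$. By the classical representation of the invariant distribution in terms of excursions (a consequence of the ergodic theorem together with Kac's lemma $\E_i[T_i^+] = 1/\phi_i$, valid for every irreducible chain regardless of aperiodicity), one has
\[
  \phi_j \;=\; \frac{\E_i[N_j^{(i)}]}{\E_i[T_i^+]} \;=\; \phi_i\, \E_i[N_j^{(i)}],
  \qquad \text{so} \qquad \E_i[N_j^{(i)}] \;=\; \frac{\phi_j}{\phi_i}.
\]

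Next I will compute the same expectation by decomposing the excursion at the first visit to $j$, if any. With probability $1 - Q_{i,j}$ the walker returns to $i$ without ever visiting $j$, contributing $0$ to $N_j^{(i)}$. With probability $Q_{i,j}$ the walker reaches $j$ before returning to $i$; the strong Markov property at $\tau_j$ then implies that the subsequent evolution, viewed only at visits to $\{i,j\}$, performs independent Bernoulli trials in which each trial yields "another visit to $j$ before $i$" with probability $1 - Q_{j,i}$ and "hit $i$ first" with probability $Q_{j,i}$. Hence the total count of visits to $j$ in the excursion, conditional on $\tau_j < \tau_i$, is geometric with success probability $Q_{j,i}$ and mean $1/Q_{j,i}$. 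Combining,
\[
  \E_i[N_j^{(i)}] \;=\; Q_{i,j} \cdot \frac{1}{Q_{j,i}}.
\]
Equating the two expressions for $\E_i[N_j^{(i)}]$ gives $Q_{i,j}/Q_{j,i} = \phi_j/\phi_i$, which is~\cref{Qpi}.

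The main conceptual step is the Kac-type identity $\E_i[N_j^{(i)}] = \phi_j/\phi_i$; everything else is a short application of the strong Markov property. I would note explicitly that the argument uses nothing beyond irreducibility, so aperiodicity is not needed, and that the symmetry appears despite the chain being possibly non-reversible because both sides are computed under the \emph{forward} dynamics from the corresponding state.
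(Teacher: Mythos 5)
Your proof is correct and is essentially the same argument as the paper's: both compute the expected number of visits to $j$ during an excursion from $i$ back to $i$ in two ways, once as $\phi_j/\phi_i$ via the Kac/Norris excursion identity, and once as $Q_{i,j}/Q_{j,i}$ via the geometric structure of repeated visits under the strong Markov property. The only cosmetic difference is that the paper arrives at $Q_{i,j}/Q_{j,i}$ by summing the tail probabilities $\prob_i[V_i^j \ge k+1] = Q_{i,j}\,(1-Q_{j,i})^k$ as a geometric series, whereas you condition on whether $j$ is reached at all and then cite the mean of a geometric random variable; these are the same computation.
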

The weighting by the invariant measure is motivated by connections between the invariant measure and random walks as found in \cite[Section 1.7]{Norris_1997}.  A proof of \cref{t:KeyIdentity} is given in \cref{s:Proofs}.
\begin{rem}
\Cref{t:KeyIdentity}~implies that, with appropriate choice of $Q_{ii}$, $\frac1n Q$ is a reversible Markov chain with invariant distribution $\phi$. 
\end{rem}
We define the \emph{normalized hitting probabilities matrix},  $\Ahtb  \in \mathbb R^{n \times n}$, by
\begin{equation}
  \label{Aht}
  \Ahtb_{i,j} := 
  \begin{cases}
    \dfrac{ \phi_i^{\beta} }{ \phi_j^{1-\beta} } Q_{i,j} & i \ne j \\
    1   & i=j
  \end{cases}
\end{equation}
where $\beta \in [\sfrac12, \infty)$. In contexts where the choice of $\beta$ is not important, we simply write $\Aht = \Ahtb$. Two useful choices for $\beta$ are $1$ and $1/2$. 
The $Q_{i,j}$ matrix has recently been shown to play a key role in determining the error of a family of stratified Markov chain Monte Carlo methods~\cite{dinner2017stratification,Thiede_2015}. 

From \cref{t:KeyIdentity}, we immediately have the following Corollary. 
\begin{corollary}\label{l:symmetric}
  The matrix $\Ahtb$ defined in~\cref{Aht} is symmetric.    In particular, 
  \begin{equation}
    \Ahtb[\sfrac12]_{i,j} = \sqrt{Q_{i,j}Q_{j,i}}.
    \label{d2sym}
  \end{equation}
\end{corollary}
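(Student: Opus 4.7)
The plan is to derive both claims by direct algebraic manipulation starting from the key identity $Q_{i,j}\phi_i = Q_{j,i}\phi_j$ provided by \cref{t:KeyIdentity}. The main observation is that the particular exponents $\beta$ and $1-\beta$ in the definition~\cref{Aht} are designed precisely so that the invariant-measure factors in the numerator and denominator combine to a single power of $\phi_i$ (or $\phi_j$), at which point the identity applies.

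First I would handle the symmetry of $\Ahtb$. For the diagonal entries this is trivial since $\Ahtb_{i,i} = 1 = \Ahtb_{j,j}$ has nothing to prove. For $i \neq j$, I would compute the ratio $\Ahtb_{i,j}/\Ahtb_{j,i}$ directly from the definition, obtaining
\[
\frac{\Ahtb_{i,j}}{\Ahtb_{j,i}} = \frac{\phi_i^\beta / \phi_j^{1-\beta}}{\phi_j^\beta/\phi_i^{1-\beta}} \cdot \frac{Q_{i,j}}{Q_{j,i}} = \frac{\phi_i}{\phi_j} \cdot \frac{Q_{i,j}}{Q_{j,i}}.
\]
By \cref{t:KeyIdentity}, the right-hand side equals $1$, which yields $\Ahtb_{i,j} = \Ahtb_{j,i}$. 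Note that this calculation is valid for every $\beta \in [\sfrac12, \infty)$ because the exponents always telescope to $\phi_i/\phi_j$; the specific range of $\beta$ plays no role in establishing symmetry itself.

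For the second assertion, I would specialize to $\beta = \sfrac12$ and use \cref{t:KeyIdentity} to rewrite $\phi_i/\phi_j$. From $\phi_i Q_{i,j} = \phi_j Q_{j,i}$ we obtain $\phi_i^{1/2}/\phi_j^{1/2} = (Q_{j,i}/Q_{i,j})^{1/2}$, and substituting into the definition gives
\[
\Ahtb[\sfrac12]_{i,j} = \frac{\phi_i^{1/2}}{\phi_j^{1/2}}\, Q_{i,j} = \left(\frac{Q_{j,i}}{Q_{i,j}}\right)^{1/2} Q_{i,j} = \sqrt{Q_{i,j} Q_{j,i}},
\]
which is exactly~\cref{d2sym}. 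There is no serious obstacle: the only subtle point is ensuring that $\phi_i, \phi_j > 0$ so that the fractional powers make sense and the division is valid, and this is guaranteed by the irreducibility of $P$, which forces the invariant distribution to be strictly positive on $[n]$.
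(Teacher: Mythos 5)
Your argument is correct and follows essentially the same route as the paper's proof: both are direct algebraic manipulations built on the key identity $Q_{i,j}\phi_i = Q_{j,i}\phi_j$, with only cosmetic differences (you show the ratio $\Ahtb_{i,j}/\Ahtb_{j,i}=1$ where the paper rewrites $\Ahtb_{i,j}$ into $\Ahtb_{j,i}$ directly, and you substitute into the definition where the paper squares). Your implicit use of $Q_{j,i}>0$ when taking the ratio is harmless, since irreducibility guarantees that along a shortest path from $j$ to $i$ one never revisits $j$, but the paper's substitution avoids even this small point.
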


\begin{proof}
  We observe
  \begin{align*}
    \Ahtb_{i,j} & = \frac{ \phi_i^\beta }{ \phi_j^{1-\beta} } Q_{i,j} =  \frac{ \phi_i^{\beta-1} }{ \phi_j^{1-\beta} }  \phi_i Q_{i,j} \\
    & =  \frac{ \phi_i^{\beta-1} }{ \phi_j^{1-\beta} }  \phi_j Q_{j,i} = \frac{ \phi_j^{\beta} }{ \phi_i^{1-\beta} } Q_{j,i} = A^{(\mathrm{hp},\beta)}_{j,i}.
  \end{align*}
  Hence, $\Ahtb$ is symmetric. 
  
  To prove \eqref{d2sym}, we observe that ${\left(\Ahtb[\sfrac12]_{i,j}\right)}^2 = \frac{\phi_i}{\phi_j} Q_{i,j}^2 = Q_{i,j} Q_{j,i}$ by~\cref{Qpi}.
\end{proof}

In some applications, information about relatedness of vertices in a graph will be most immediately encoded in the form of a non-stochastic adjacency matrix $A$.  In this case the input adjacency matrix can be transformed into a stochastic matrix $P$ either by a similarity transformation involving the dominant right eigenvector of $A$ or by normalization of the rows of $A$ so that they sum to 1.  The resulting stochastic matrix $P$ can then be used as in~\cref{Aht} to construct $A^{(\textrm{hp},\beta)}$, itself a symmetric adjacency matrix on the vertices of the network.  In this article we do not address the relative merits of methods to transform an adjacency matrix into a stochastic matrix. We use row normalization unless otherwise stated.

Given an irreducible stochastic matrix $P$, we can thus define a distance $d^\beta \colon [n] \times [n] \to \mathbb R$, which we refer to as the \emph{hitting probability metric}, by 
\begin{equation} \label{e:Dist}
  d^\beta(i,j) = - \log \left( \Ahtb_{i,j} \right). 
\end{equation}

\begin{theorem}\label{t:Metric}
  The hitting probability metric, $d^\beta \colon [n] \times [n] \to \mathbb R$, defined in~\cref{e:Dist} is a metric for $\beta \in (\sfrac12, 1]$.  For $\beta = \sfrac12$, $d^\beta$ is a pseudo-metric\footnotemark and there exists a quotient graph
  on which the distance function becomes a metric. 
\end{theorem}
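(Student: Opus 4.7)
The plan is to verify the four metric axioms for $d^\beta$; the main substance is the triangle inequality, and the pseudo-metric case for $\beta = \sfrac12$ is handled by a quotient construction at the end.

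First I would combine \cref{t:KeyIdentity} with the symmetry of $\Ahtb$ from \cref{l:symmetric} to obtain the convenient representation
\[
\Ahtb_{i,j} = (\phi_i\phi_j)^{\beta-\sfrac12}\sqrt{Q_{i,j}Q_{j,i}}.
\]
From this, symmetry of $d^\beta$ is immediate, as is $d^\beta(i,i)=0$. Non-negativity follows because each factor on the right lies in $(0,1]$: indeed $\phi_i,\phi_j \in (0,1]$ with $2\beta-1 \geq 0$, and $Q_{i,j},Q_{j,i} \in [0,1]$, so $\Ahtb_{i,j} \leq 1$. For strict positivity when $\beta \in (\sfrac12,1]$, irreducibility with $n\geq 2$ states forces $\phi_i<1$ for every $i$, so the prefactor $(\phi_i\phi_j)^{\beta-\sfrac12}$ is strictly less than $1$ whenever $\beta>\sfrac12$, giving $d^\beta(i,j)>0$ for $i\neq j$.

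The triangle inequality $d^\beta(i,j) \leq d^\beta(i,k)+d^\beta(k,j)$ is equivalent to $\Ahtb_{i,j} \geq \Ahtb_{i,k}\Ahtb_{k,j}$. Substituting \cref{Aht} and cancelling the common factor $\phi_i^{\beta}/\phi_j^{1-\beta}$ reduces this to
\[
Q_{i,j} \;\geq\; \phi_k^{2\beta-1}\, Q_{i,k}\, Q_{k,j},
\]
and since $\phi_k^{2\beta-1} \leq 1$ whenever $\beta \in [\sfrac12,1]$, it suffices to establish the scale-free inequality
\begin{equation}
Q_{i,j} \;\geq\; Q_{i,k}\,Q_{k,j} \qquad\text{for all distinct } i,j,k. \label{eq:keyineq_plan}
\end{equation}

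The main obstacle is \cref{eq:keyineq_plan}: a naive strong Markov factorisation at the first visit to $k$ is \emph{not} a subset containment, because walks contributing to $Q_{i,k}Q_{k,j}$ are free to revisit $i$ after $\tau_k$, whereas $Q_{i,j}$ forbids such a revisit. My plan is a first-passage decomposition over the three distinguished states. With the six quantities $\alpha_{xy}^z := \prob_x[\tau_y < \tau_x \wedge \tau_z]$ ranging over orderings of $\{i,j,k\}$, standard geometric-sum bookkeeping over successive excursions from the starting state expresses each of $Q_{i,j}$, $Q_{i,k}$, and $Q_{k,j}$ as an explicit rational function of these six non-negative parameters. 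The detailed-balance identity \cref{Qpi}, together with the Kolmogorov-cycle consequence $Q_{i,j}Q_{j,k}Q_{k,i} = Q_{i,k}Q_{k,j}Q_{j,i}$ (obtained by multiplying the three detailed-balance relations), then eliminates enough asymmetries to reduce \cref{eq:keyineq_plan} to an algebraic inequality in the six parameters that can be verified directly.

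Finally, for $\beta=\sfrac12$ the identity $\Ahtb[\sfrac12]_{i,j} = \sqrt{Q_{i,j}Q_{j,i}}$ shows that $d^{\sfrac12}(i,j)=0$ exactly when $Q_{i,j}=Q_{j,i}=1$. Defining $i\sim j$ by this condition, reflexivity and symmetry are built in, and transitivity follows from the triangle inequality just proved: $d^{\sfrac12}(i,j) = d^{\sfrac12}(j,k) = 0$ forces $d^{\sfrac12}(i,k) \leq 0$, which combined with non-negativity gives equality. Thus $\sim$ is an equivalence relation and $d^{\sfrac12}$ descends to a genuine metric on the quotient space $[n]/\sim$.
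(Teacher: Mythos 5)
Your overall architecture is exactly the paper's: reduce the triangle inequality via the identity
\[
d^\beta(i,j) = d^\beta(i,k) + d^\beta(k,j) + (2\beta-1)\log\phi_k + \bigl[\log Q_{i,k} + \log Q_{k,j} - \log Q_{i,j}\bigr]
\]
to the scale-free inequality $Q_{i,j} \geq Q_{i,k}Q_{k,j}$, and establish that by restricting the walk to the three states $\{i,j,k\}$. Your positivity argument, using the clean representation $\Ahtb_{i,j} = (\phi_i\phi_j)^{\beta-\sfrac12}\sqrt{Q_{i,j}Q_{j,i}}$, is equivalent to the paper's proof by contradiction and arguably more transparent.

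The genuine gap is at the step you yourself flag as the main obstacle, and it never gets closed. You correctly set up the $3\times 3$ first-passage matrix $F$ (your $\alpha_{xy}^z$ are exactly $F_{x,y}$), and correctly observe that $Q_{i,j}$, $Q_{i,k}$, $Q_{k,j}$ are rational functions of these entries, e.g.\ $Q_{i,j} = F_{i,j} + F_{i,k}F_{k,j}/(1-F_{k,k})$. But then you invoke the detailed-balance identity and the Kolmogorov cycle relation $Q_{i,j}Q_{j,k}Q_{k,i} = Q_{i,k}Q_{k,j}Q_{j,i}$ to ``eliminate asymmetries'' and assert the resulting inequality ``can be verified directly'' --- without writing it down. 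This is the entire content of the lemma and cannot be waved at. Moreover, the extra machinery you reach for is a red herring: the paper proves $Q_{i,j}(1-F_{k,k}) \geq Q_{i,k}Q_{k,j}$ by pure algebra on the six off-diagonal $F$-entries, using \emph{only} the row-sum constraint $F_{x,i}+F_{x,j}+F_{x,k}=1$ (via the bound $F_{j,k}/(1-F_{j,j}) \leq 1$). No reversibility or cycle identity is used or needed; the inequality is a universal fact about arbitrary three-state chains. Your route through reversibility constraints does not obviously simplify anything and is not demonstrated to converge.

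A secondary point: for $\beta=\sfrac12$, identifying the pseudo-metric and forming the metric quotient space $[n]/\!\sim$ is correct as far as it goes, but the theorem claims a quotient \emph{graph} --- i.e.\ a Markov chain on the equivalence classes whose own $d^{\sfrac12}$ recovers a metric. That is strictly stronger than the set-theoretic quotient you describe (the paper makes this point explicitly in a footnote) and is handled separately via the quotient-chain construction of \cref{d:quotient} and the bounds in \cref{t:12bounds}. Your proof covers ``pseudo-metric'' but not ``there exists a quotient graph.''
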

\footnotetext{Recall that a pseudo-metric on $[n]$ is a non-negative real function $f\colon [n] \times [n] \to \mathbb R_{\geq 0}$ satisfying $d(i,i) = 0$, symmetry $d(i,j) = d(j,i)$, and the triangle inequality $d(i,j) \leq d(i,k) + d(k,j)$. A pseudo-metric is a metric if we can identify indiscernible values, \ie, $d(i,j) = 0 \iff i=j$.}
In~\cref{t:12bounds}, we show that there exists a quotient graph on which $d^{\sfrac12}$ is a metric and which preserves many of the metric properties of the original graph.\footnote{While the usual pseudo-metric quotienting procedure could apply here, there is no guarantee that there would be a corresponding subgraph, which is why~\cref{t:12bounds} is needed.}  The key observation for the $d^{\sfrac12}$ pseudometric is that in order for two vertices to be distance $0$ from each other, the probability of hitting the other vertex before returning must be $1$ for both.  Hence we provide (in~\cref{structure,quotients}) a means of effectively collapsing these vertices to a single vertex, carefully preserving the overall probabilities relative to the remaining vertices.

\begin{rem}
  In light of~\cref{t:KeyIdentity,t:Metric}, $\Aht$ has two interpretations, first as a symmetrization of $A$, and second as a weighted similarity graph corresponding to $d$, since $A(i,j) = e^{-d(i,j)}$. The practice of associating a finite (subset of a) metric space with a similarity graph in this way is widespread, especially in the manifold learning and graph-based machine-learning communities.\footnote{\cite{self-tuning} cites~\cite{shi-malik,relocalization} as this similarity function's first use specifically for graph-based clustering.} Thus, in our experiments, we favored the use of $\Aht$ for certain applications where it seemed more natural.
\end{rem}

Finally, we show how advances from~\cite{Thiede_2015} enable us to compute the distance matrix in $O(n^3)$ operations, allowing us to scale up to $\approx 38M$ edges in examples on a 
Lenovo ThinkStation P410 desktop with Xeon E5--1620V4 3.5 GHz CPU and 16 GB RAM using {\sc MATLAB} R2019a Update 4 (9.6.0.1150989) 64-bit (glnxa64).
We also provide various synthetic examples to help develop an intuition for the metric and its differences from other measures. We conclude with an example using New York City taxi data to illustrate how our metric can aid in data exploration.

\subsection{Relationship to other notions of similarity and metrics}\label{s:RelWork}
In this section, we discuss some related notions of similarity and metrics on finite state spaces with asymmetric (directional) relationships. Our focus is on symmetric notions of dissimilarity, with an emphasis on metrics. While, in some applications, asymmetric similarity scores may be the right choice (see, \eg, Tversky's seminal  work on features of similarity~\cite{Tversky_1977}), we restrict our scope to symmetric notions. We do, however, wish to mention directed metrics (also called quasi-metrics), which are a natural analogue to metric spaces for relaxations of digraph cut problems~\cite{directed_metric}.

From~\cite{chebotarev-2020,kemeny,rozinas}, we know that commute time is a metric on ergodic Markov chains. In~\cite{choi19resistance,Young_2016b,Young_2016a}, generalizations of effective resistance are developed for ergodic Markov chains and directed graphs. Commute time and resistance-based metrics are popular and more robust than shortest-path distances, although they are not informative in certain large-graph limits~\cite{vonluxborg2014}. In~\cref{sec:gluedcycles}, we compare the effective resistance of~\cite{Young_2016b,Young_2016a} to the hitting probability metric on a particular example.

In~\cite{vollering2018}, a metric is developed on Markov chains. This metric gives the chain constant curvature, in an appropriate generalized sense. Distance in this metric is then related to the distinguishability after one step of random walks beginning at the two distinct nodes. The metric is constructed jointly with the curvature using a fixed point argument. It is expected to be useful in proving, for example, concentration inequalities for Markov chains.

Notions of diffusion distance to a set $B$ on undirected graphs have been explored recently for the connection Laplacian~\cite{singer2012vector} and for the graph Laplacian~\cite{cheng2019diffusion}.  The notion of distance is determined by taking $\ell$ steps using the random walk generated by the symmetric graph adjacency matrix $A$ with degree matrix $D$, i.e.\ it counts the number of walks of  length $2\ell$ from $i$ to $j$.
Diffusion distances from a vertex $i$ to a sub-graph $B$ in~\cite{cheng2019diffusion} is defined as the smallest number of steps for all random walks started at $i$ to reach $B$.  The work~\cite{singer2012vector} established that diffusion distances converge to geodesic distances in the high density limit of random graphs on manifolds, and \cite{cheng2019diffusion} explored how eigenvectors relate to this notion of distance.  Directed graphs have been represented as magnetic connection Laplacians on undirected graphs through a notion of polarization, see~\cite{fanuel2018magnetic}, after which a version of diffusion distance can be applied.

A variety of methods exist in machine learning to compute ``graph representations,'' which are learned embeddings of nodes, subgraphs, or entire (possibly directed) graphs into Euclidean space so that they can be fed into standard machine learning tools~\cite{hamilton17representation}. These can be seen as imposing a metric on directed graphs, with the main drawbacks relative to the hitting probability metric being model complexity, difficulty of interpretation, and difficulty of analysis.

In~\cite{malliaros13clustering}, existing symmetrization techniques for directed graphs are surveyed. In particular, we mention \cite{satuluri2011symmetrizations,zhou2005semi,lai2010extracting,chen2008clustering}.  In each of these articles, clustering, community detection and/or semi-supervised learning techniques are considered on directed graphs using various symmetrizations, such as that of Fan Chung (\eg~\cite{satuluri2011symmetrizations}) or using commute times similar to those in the effective resistance metric (\eg~\cite{chen2008clustering}). Our results use $\Aht$ as a symmetrization, and we will see that this enables us to perform the tasks just mentioned, although with different and sometimes more helpful results.

In~\cite{fitch}, the metric of~\cite{Young_2016a,Young_2016b} is used as the basis for a digraph symmetrization technique. It is guaranteed to preserve effective resistances, possibly relying on negative entries. Rigorous applications to directed cut and graph sparsification are given.

\subsection*{Outline} 
We prove \cref{t:KeyIdentity,l:symmetric,t:Metric} in \cref{s:Proofs}.
In \cref{s:CompMeth}, we describe computational methods to compute the normalized hitting probabilities matrix, $\Ahtb$. 
In \cref{s:Examples}, we give some examples of the computed metric. 
We conclude in~\cref{s:Disc}.

\section{Proofs and discussion of structural properties}\label{s:Proofs} 
\subsection{Structure of the normalized hitting probabilities matrix}
\begin{proof}[Proof of \cref{t:KeyIdentity}]
  The probability that $X_t$ starts from $i$  and hits $j$ at least $k+1$ times before returning to $i$ can be expressed as
  \[
    \prob_i [\tau_j < \tau_i] \prob_j {[\tau_j < \tau_i]}^k\,. 
  \]  
  We let $V_i^j$ be the number of times $X_t$ hits $j$ before returning to $i$, 
  $
    V_i^j = \sum_{t=1}^{\tau_i} {1}_{X_t = j}\,.
  $
  Then, we have
  \[
    \prob_i [\tau_j < \tau_i] \prob_j {[\tau_j < \tau_i]}^k = \prob_i [V_i^j \geq k+1]\,.
    \label{r1}
  \]

  Now observe that 
  \begin{align}
    &\label{geo} \sum_{k=0}^\infty \prob_i [\tau_j < \tau_i] \prob_j {[\tau_j < \tau_i]}^k  =\sum_{k=0}^\infty \prob_i [ V_i^j \ge k + 1] 
    =\sum_{k=0}^\infty \E_i \left[1_{V_i^j \ge k+1}\right]  \\
    &\hspace{1cm} = \E_i \left[\sum_{k=0}^\infty  {1}_{V_i^j \geq k+1}\right]
    = \E_i \left[\sum_{k=0}^\infty k 1_{V_i^j=k} \right]  
    = \sum_{k=0}^\infty k \prob[V_i^j=k] \notag \\
    &\hspace{1cm}= \E_i [ V_i^j ] \label{r3}\,.
  \end{align}
  The expectation in \cref{r3} is known to satisfy
  \begin{equation}
    \label{gammaid}
    \E_i [ V_i^j ] = \frac{\phi_j}{\phi_i}\,,
  \end{equation}
  which is proved in, for example,~\cite[Theorem 1.7.6]{Norris_1997}.

  However, we recognize the expression \cref{geo} as a geometric series and hence have
  \begin{align*}
    \sum_{k=0}^\infty \prob_i [\tau_j < \tau_i] \prob_j {[\tau_j < \tau_i]}^k
    &= \prob_i [\tau_j<\tau_i]{\left(1-\prob_j[\tau_j<\tau_i]\right)}^{-1} \\
    &= \prob_i [\tau_j < \tau_i] \prob_j {[\tau_i < \tau_j]}^{-1} = Q_{i,j} Q_{j,i}^{-1}\,.
  \end{align*}
  
  Combining this with~\cref{gammaid} we arrive at $Q_{i,j} Q_{j,i}^{-1}=\frac{\phi_j}{\phi_i}$.
\end{proof}

To prove~\cref{t:Metric}, we will need one more lemma
\begin{lemma}%
  \label{Qlemma}
  The following inequality holds
  \begin{equation}
    \label{Qineq}
    Q_{i,j} \geq Q_{i,k} Q_{k,j}\,.
  \end{equation}
\end{lemma}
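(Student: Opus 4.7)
The plan is to prove \cref{Qlemma} by a direct probabilistic decomposition that tracks the first visit of the walk among the three distinguished states $\{i,j,k\}$. Following the convention $\tau_x = \inf\{t \geq 1\colon X_t = x\}$ from the excerpt, I would introduce the ``first-event'' probabilities
\[ e := \prob_i[\tau_j < \tau_k \wedge \tau_i], \qquad a := \prob_i[\tau_k < \tau_j \wedge \tau_i], \]
\[ c := \prob_k[\tau_j < \tau_i \wedge \tau_k], \quad d := \prob_k[\tau_i < \tau_j \wedge \tau_k], \quad b := \prob_k[\tau_k < \tau_i \wedge \tau_j], \]
so that $b + c + d = 1$. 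One auxiliary quantity $\mu := \prob_j[\tau_k < \tau_i]$ will appear, but only through the trivial bound $\mu \leq 1$.

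Next, by the same geometric-sum argument used in the proof of \cref{t:KeyIdentity} (each excursion from $k$ back to $k$ either resolves the race between $\tau_j$ and $\tau_i$ with odds $c:d$, or returns to $k$ and restarts), I would derive
\[ \prob_k[\tau_j < \tau_i] = \frac{c}{c+d}, \qquad \prob_i[\tau_j < \tau_k] = \frac{e}{a+e}. \]
Applying the strong Markov property at $\tau_k$, $\tau_j$, and $\tau_i$ respectively, these yield the closed forms
\[ Q_{i,j} = e + \frac{ac}{c+d}, \qquad Q_{i,k} = a + e\mu, \qquad Q_{k,j} = c + \frac{de}{a+e}. \]

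Finally, I would combine these. Because $\mu \leq 1$, we have $Q_{i,k} \leq a + e$, whence
\[ Q_{i,k}\, Q_{k,j} \;\leq\; (a+e)\!\left(c + \frac{de}{a+e}\right) \;=\; ac + e(c+d). \]
A short algebraic simplification, using $1-(c+d)=b$, shows
\[ Q_{i,j} - \bigl[ac + e(c+d)\bigr] \;=\; eb + \frac{ac\,b}{c+d} \;=\; b\, Q_{i,j} \;\geq\; 0, \]
so $Q_{i,j} \geq (a+e)\, Q_{k,j} \geq Q_{i,k}\, Q_{k,j}$, as required.

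I expect the main obstacle to be the bookkeeping, not any individual deduction: the virtue of this approach is that a single set of primitives $\{a,b,c,d,e\}$ (together with the easily-bounded $\mu$) simultaneously expresses all three of $Q_{i,j}$, $Q_{i,k}$, $Q_{k,j}$, and the slack in the inequality is then captured cleanly as $b\, Q_{i,j}$ plus a positive contribution from $\mu < 1$, reflecting the extra freedom for the walk from $k$ to take further $k$-excursions before resolving its race.
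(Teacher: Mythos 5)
Your proof is correct and is essentially the same argument as the paper's: your quantities $a,b,c,d,e$ are exactly the entries of the paper's auxiliary three-state transition matrix $F$ (with $a = F_{i,k}$, $e = F_{i,j}$, $c = F_{k,j}$, $d = F_{k,i}$, $b = F_{k,k}$), your bound $\mu\le 1$ is the paper's inequality $F_{j,k}/(1-F_{j,j})<1$, and both proofs land on the same intermediate bound $Q_{i,k}Q_{k,j}\le (1-F_{k,k})\,Q_{i,j}$.
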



\begin{proof}
  Consider the corresponding auxiliary Markov process restricted to nodes $i$, $j$, and $k$ with $3\times 3$ transition matrix $F$, the elements of which we denote by, \eg, $F_{i,j} = \prob_i[\tau_j < \min \{ \tau_i, \tau_k\}]$ and $F_{i,i} = \prob_i[\tau_i < \min \{\tau_j, \tau_k\}]$. That is,  $F_{i,j}$ gives the probability of a random walker starting at $i$ eventually reaching $j$ before either reaching $k$ or returning to $i$, while $F_{i,i}$ gives the probability of a random walker starting at $i$ returning to $i$ before reaching either $j$ or $k$.
  Since $F_{i,i}+F_{i,j}+F_{i,k}=1$, we have
  \begin{equation*}
    Q_{i,j} = F_{i,j} + \frac{ F_{i,k}F_{k,j}}{1-F_{k,k}}\,, \quad
    Q_{i,k} = F_{i,k} + \frac{ F_{i,j}F_{j,k}}{1-F_{j,j}}\,, \quad
    Q_{k,j} = F_{k,j} + \frac{ F_{k,i}F_{i,j}}{1-F_{i,i}}\,.
  \end{equation*}
  Hence, we observe
  \begin{align*}
    Q_{i,k} Q_{k,j} & = \left( F_{i,k} + \frac{F_{i,j} F_{j,k}}{1-F_{jj} }  \right) \left( F_{k,j} + \frac{F_{k,i} F_{i,j}}{1-F_{ii} }  \right) \\
    & =   F_{i,k}F_{k,j} + F_{i,j}  \left(  \frac{F_{j,k} F_{k,j}}{1-F_{jj} }  + \frac{F_{i,k} F_{k,i}}{1-F_{ii} } + \frac{F_{j,k} F_{k,i} F_{i,j}}{(1-F_{ii}) (1-F_{jj} ) }  \right)\,.
  \end{align*}
  Using
  \[
    \frac{F_{j,k}}{1-F_{j,j}} = \frac{F_{j,k}}{F_{j,i} + F_{j,k}} < 1\,,
  \]
  we then observe
  \begin{align*}
    Q_{i,k} Q_{k,j} & \leq F_{i,k}F_{k,j} + F_{i,j}  \left( F_{k,j}  + \frac{F_{i,k} F_{k,i}}{1-F_{ii} } + \frac{ F_{k,i} F_{i,j}}{1-F_{ii}}  \right)  \\
    &= F_{i,k}F_{k,j} + F_{i,j}  \left( F_{k,j}  + F_{k,i} \frac{F_{i,k} + F_{i,j} }{1-F_{ii} }  \right)  \\
    & = F_{i,k}F_{k,j} + F_{i,j}  \left( F_{k,j}  + F_{k,i}   \right)  \\
    & = \left( \frac{F_{i,k} F_{k,j} }{1-F_{k,k}} + F_{i,j}  \right) (1-F_{k,k}) \\
    & = Q_{i,j}(1-F_{k,k})  \leq Q_{i,j}\,.
  \end{align*}

\end{proof}

\subsection{Hitting probability metric}
In this section we establish~\cref{t:Metric}.  In particular, we explore the notion that, much like effective resistance, the normalized hitting probabilities matrix provides a natural notion of distance on the digraph (or between states of a Markov Chain).  

To begin, we recall the definition of $d^\beta$ from~\cref{e:Dist} and note that we have already established the symmetry $d^\beta(i,j) = d^\beta (j,i)$ for all $i,j$.  
As seen from the statement of~\cref{t:Metric}, we will observe that the triangle inequality holds for all $\beta \ge \sfrac12$ and that positivity holds for all $\beta > \sfrac12$.  In the case $\beta = \sfrac12$, $d^{\sfrac12}$ gives a pseudometric structure, as there can indeed exist structures in a directed graph or Markov Chain on which $d^{\sfrac12} (i,j) = 0$ and $i \neq j$.  As an example, consider the nodes on a cycle with in degree and out degree $1$. (See~\cref{s:Examples}.)

When $d^{\sfrac12}(i,j)=0$ there are specific structures that restrict all random walks leaving $i$ so that they must hit $j$ before returning to $i$.  We show in~\cref{t:12bounds} that for any graph, there exists a canonical quotient graph on which $d^{\sfrac12}$ is indeed a metric that is closely related to $d^{\sfrac12}$ on the original graph.  Let us now proceed to the proofs.

\begin{proof}[Proof of \cref{t:Metric}]

  First, we show positivity for $i\ne j$. Note that $d^\beta(i,j) > 0$ iff $\Ahtb_{i,j} < 1$. Consider the converse 
  \[
    1 \le \Ahtb_{i,j} = \frac{\phi_i^\beta}{\phi_j^{1-\beta} }  Q_{i,j} = \frac{\phi_j^\beta}{\phi_i^{1-\beta} }  Q_{j,i}\,,
  \]
  that is,
  \[
    \frac{\phi_j^{1-\beta}}{\phi_i^\beta} \le Q_{i,j} \quad \text{ and } \quad \frac{\phi_i^{1-\beta}}{\phi_j^\beta} \le Q_{j,i}\,.
  \]
  Then if $\beta > \sfrac12$, we have
  \[
    1 \ge Q_{i,j} Q_{j,i} \ge \phi_j^{1-2\beta} \phi_i^{1-2\beta} > 1\,,
  \]
  a contradiction. For $\beta=\sfrac12$, the last inequality above becomes an equality, so the corresponding argument by contradiction requires only that $\Ahtb[\sfrac12] \le 1$ and thus $d^{\sfrac12}(i,j) \ge 0$.

  Symmetry follows from~\cref{l:symmetric}, and $d^\beta(i,i)=0$ is immediate, so all that remains is the triangle inequality.

  To prove the triangle inequality, we observe for $i\neq j\neq k$ that
  \begin{align}
    d^\beta(i,j) &= - \log \left( \Ahtb_{i,j} \right) = -\beta\log\phi_i -(\beta-1)\log\phi_j - \log Q_{i,j} \notag \\
    &= d^\beta(i,k) + d^\beta(k,j) + (2 \beta - 1) \log\phi_k + [\log Q_{i,k} + \log Q_{k,j} - \log Q_{i,j}]\,,
    \label{e:triangle-slack}
  \end{align}
  which, applying~\cref{Qlemma}, proves that the triangle inequality holds for all $\beta \geq \sfrac12$.
\end{proof}

We observe that the $2 \beta -1$ coefficient of $\log \phi_k$ in~\cref{e:triangle-slack} vanishes when $\beta=\sfrac12$, hence the triangle inequality is as tight as possible (since $Q_{i,k}Q_{k,j}/Q_{i,j} = 1$, \eg,~for a directed cycle graph).  
From the above argument, we can see that the only obstruction to $d^{\frac12}$ being a metric is if there is a pair $i,j$ such that
\[
  \Ahtb[\sfrac12]=\sqrt{ \frac{\phi_j}{\phi_i} } Q_{j,i}  = \sqrt{ \frac{\phi_i}{\phi_j} } Q_{i,j} = 1\,.
\]
In this case,
\[  Q_{j,i} = \sqrt{ \frac{\phi_i}{\phi_j} }  \quad\text{ and }\quad Q_{i,j} = \sqrt{ \frac{\phi_j}{\phi_i} }\,. \]
Thus, $Q_{i,j} Q_{j,i} =1$,
which, as they are both probabilities, means in fact $Q_{i,j} = Q_{j,i} =1$.
Hence, also $\phi_i = \phi_j$.

\begin{obs}
  The condition $\phi_i = \phi_j$ is not an extra restriction beyond $Q_{i,j}  = Q_{j,i} =1$: if $Q_{i,j} =Q_{j,i}=1$ then a random walker must visit $i$ every time it visits $j$ (and vice versa) and hence the invariant probabilities of sites $i$ and $j$ must be equivalent.
\end{obs}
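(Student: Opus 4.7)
The plan is to give two short, complementary arguments, since the observation is essentially a structural consequence of facts already established earlier in the paper.

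First, I would give the one-line algebraic proof via~\cref{t:KeyIdentity}. By~\cref{Qpi}, we always have $Q_{i,j}\phi_i = Q_{j,i}\phi_j$, so substituting $Q_{i,j}=Q_{j,i}=1$ immediately yields $\phi_i=\phi_j$. This already settles the observation, but it hides the dynamical picture that the statement explicitly invokes, so I would follow it with the random-walk argument that the text alludes to.

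For the probabilistic version, I would work with the number of visits $V_i^j$ to $j$ before returning to $i$, as introduced in the proof of~\cref{t:KeyIdentity}. Starting at $i$, the assumption $Q_{i,j}=1$ forces the walker to hit $j$ at least once before returning to $i$, so $V_i^j\ge 1$ almost surely under $\prob_i$. Conditioned on that first visit to $j$, the strong Markov property gives a fresh walk starting at $j$, for which $Q_{j,i}=1$ implies the walker hits $i$ before ever returning to $j$. Hence the walker cannot accumulate a second visit to $j$ before returning to $i$, and $V_i^j=1$ almost surely. Then~\cref{gammaid} gives
\[
  1 = \E_i[V_i^j] = \frac{\phi_j}{\phi_i},
\]
so $\phi_i=\phi_j$. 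The same argument with the roles of $i$ and $j$ reversed confirms the claim symmetrically.

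There is no serious obstacle here; the only mild care needed is to invoke the strong Markov property correctly to chain the two ``before returning'' events together so as to conclude $V_i^j=1$ rather than merely $V_i^j\ge 1$. Both arguments should fit in a few lines, and I would present them in that order so the reader first sees the clean algebraic fact and then the intuitive reason it holds.
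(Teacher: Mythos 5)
Your proposal is correct, and the probabilistic argument via $V_i^j$ and the strong Markov property is a careful formalization of exactly the informal reasoning the paper gives inline in the observation itself (``a random walker must visit $i$ every time it visits $j$ and vice versa''). The algebraic shortcut via \cref{t:KeyIdentity} is an equally valid, more immediate route that the paper leaves implicit.
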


\subsection{Structure theory of digraphs where \texorpdfstring{$d^{\sfrac12}$}{d} is not a metric}%
\label{structure}
In this subsection, we investigate the structure of graphs where $d^{\sfrac12}$ is not a metric, which we refer to as ($d^{\sfrac12}$-) degenerate. This is useful for understanding our metric embedding and is foundational for~\cref{quotients}, where we derive the quotienting procedure to repair graph degeneracies. In this section, we first give a general construction to produce degenerate graphs and show that all degenerate graphs can be constructed in this way. Next, we give a general decomposition of degenerate graphs into equivalence classes and their segments.

\subsubsection{A general construction for degenerate graphs}

A simple example of a graph with $Q_{i,j}=Q_{j,i}=1$ is a closed, directed cycle.  However, we also have the following much more general construction: 
Take any two directed acyclic graphs (DAGs), $G_1$ and $G_2$. Connect all the leaves (sinks/nodes of our-degree zero) of $G_1$ to $i$ and all the leaves of $G_2$ to $j$. Connect $j$ to all the roots (sources/nodes of in-degree 0) of $G_1$ and $i$ to all the roots of $G_2$. Possibly add edges between $i$ and $j$. Then, for each node $k$ except $i$ and $j$, replace it with an arbitrary strongly connected graph $H_k$ (corresponding to an irreducible Markov chain), replacing each edge to (from) $k$ with at least one edge to (from) a node in $H_k$. The resulting graph is strongly connected and has $i$ and $j$ only reachable through each other.

In fact, all graphs with $Q_{i,j}=Q_{j,i}=1$ can be constructed this way. To see this, note that $i$ and $j$ must have positive in and out degree by strong connectedness. Let $C_i$ be those nodes reachable from $i$ without passing through $j$, and define $C_j$ similarly. Consider the following claim:
\begin{claim} 
  $C_i \cup C_j \cup \{i,j\}$ includes all nodes of the graph, and $C_i\cap C_j$ is empty. 
\end{claim}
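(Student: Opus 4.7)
The plan is to argue entirely inside the underlying directed graph, using strong connectedness (inherited from irreducibility of the chain) to produce paths, and invoking the hypothesis $Q_{i,j}=Q_{j,i}=1$ only in the disjointness half of the claim. The key translation I will use is that, since every edge of the support graph carries strictly positive transition probability, $Q_{i,j}=1$ is equivalent to the purely combinatorial statement that no directed closed walk based at $i$ avoids $j$; symmetrically for $Q_{j,i}=1$. I will also adopt the natural reading $C_i, C_j \subseteq V\setminus\{i,j\}$, since otherwise $\{i,j\}\subseteq C_i\cap C_j$ trivially from degenerate length-zero paths.

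For the covering assertion, fix $v \notin \{i,j\}$ and, by strong connectedness, pick a simple directed path $P\colon i \to v$. If $j$ does not lie on $P$, then $v \in C_i$ by definition. Otherwise, let $P'$ be the suffix of $P$ beginning at the unique occurrence of $j$. By simplicity, $P'$ never revisits $j$; moreover $i$ occurs on $P$ only as the starting vertex, which lies strictly before $j$, so $P'$ contains no $i$ at all. Hence $P'$ witnesses $v \in C_j$.

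For the disjointness assertion, suppose toward contradiction that $v \in C_i \cap C_j$, and fix witness paths $P_1\colon i \to v$ avoiding $j$ and $P_2\colon j \to v$ avoiding $i$. By strong connectedness $v$ reaches $\{i,j\}$; let $P_3$ be a shortest directed path from $v$ into $\{i,j\}$. Minimality forces the interior of $P_3$ to miss $\{i,j\}$, so $P_3$ meets $\{i,j\}$ only at its endpoint. If that endpoint is $i$, concatenating $P_1$ with $P_3$ yields a closed walk based at $i$ that never visits $j$; each of its finitely many edges has positive probability, so the walker returns to $i$ before hitting $j$ with positive probability, contradicting $Q_{i,j}=1$. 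If the endpoint of $P_3$ is $j$, concatenating $P_2$ with $P_3$ gives the symmetric contradiction to $Q_{j,i}=1$.

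The main obstacle is not really in the path surgery, which is elementary, but in making rigorous the reformulation of $Q_{i,j}=1$ as the nonexistence of closed walks at $i$ avoiding $j$. This step is where positivity of edge weights (from irreducibility) is essential, and once it is in place the rest of the argument reduces to bookkeeping about simple paths under strong connectedness.
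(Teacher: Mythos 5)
Your proof is correct and takes essentially the same approach as the paper: strong connectedness to produce witness paths, with $Q_{i,j}=Q_{j,i}=1$ read combinatorially as the nonexistence of closed walks at one vertex avoiding the other. The only cosmetic difference is in the disjointness half: you route a shortest path from $v$ into $\{i,j\}$ and branch on which endpoint it hits, whereas the paper aims a path from $v$ at $j$ and first rules out that it passes through $i$ by a small sub-contradiction; the two path-surgery arguments are interchangeable.
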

\begin{proof}
  For the first part, consider a fixed node $k$ which is not $i$ or $j$, together with a shortest path $C_{ik}$ from $i$ to $k$. If $C_{ik}$ does not pass through $j$, then $k\in C_i$; otherwise, $k$ is reachable from $j$ without passing through $i$ since $C_{ik}$ is a shortest path.

  For the second part, assume otherwise; that is, pick $k \in C_i \cap C_j$. Then there exist (1) a path $C_{ik}$ from $i$ to $k$ not passing through $j$, (2) a path $C_{jk}$ from $j$ to $k$ not passing through $i$, and (3) a path $C_{kj}$ from $k$ to $j$ (by strong connectedness). Assume WLOG that $C_{kj}$ passes through $j$ only at the end. $C_{kj}$ cannot pass through $i$ since otherwise $C_{ik}+C_{kj}$ contains a walk from $i$ to $i$ without passing through $j$, violating $Q_{i,j}=1$.
  But then $C_{jk}+C_{kj}$ would be a walk from $j$ to $j$ that does not pass through $i$, contradicting $Q_{j,i}=1$.
\end{proof}
Since $C_i$ and $C_j$ can only be connected through $i$ and $j$, removing $i$ and $j$ disconnects these two sets. Now consider the subgraphs induced by $C_i$ and $C_j$, respectively. As can be done with any directed graph, we reduce each of these subgraphs to their quotients under the mutual reachability equivalence relation, yielding a pair of DAGs. The next subsection generalizes this decomposition to account for all nodes for which $d^{\sfrac12}$ vanishes rather than a single pair.

\subsubsection{Decomposition into equivalence classes and segments}

Consider an equivalence class $\alpha=\{a_1,a_2,\ldots,a_K\}$ of nodes under the equivalence relation $i\sim j \Leftrightarrow d^{\sfrac12}_{i,j}=0$.
We refer to a node in a non-singleton equivalence class as \emph{($d^{\sfrac12}$-) degenerate}. A graph is $d^{\sfrac12}$-degenerate if it has a degenerate node.
\begin{definition}\leavevmode
  \begin{itemize}
    \item A \emph{walk} is a sequence of nodes $\{i_1,i_2,\ldots,i_K\}$ such that $P_{i_k,i_{k+1}}\!\!>\!0$, for $1 \le k<K$.
    \item A walk is \emph{closed} if $i_K = i_1$.
    \item A closed walk is a \emph{commute} from $i_1$ if $i_{k} \ne i_1$, for $1<k<K $.
    \item A walk is a \emph{path} if $i_k\ne i_{k'}$ when $k'\ne k$. A commute is a \emph{cycle} if it is a path when the last element is removed.
  \end{itemize}
\end{definition}
\begin{lemma}
  A commute from $a_k\in \alpha$ must include each of the other members of $\alpha$ exactly once in an order that depends only on the graph.
\end{lemma}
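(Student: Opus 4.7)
The plan is to leverage the key observation, already established in the discussion preceding the lemma, that $d^{\sfrac12}(a_i,a_j)=0$ forces $Q_{a_i,a_j} = Q_{a_j,a_i} = 1$ for every pair $a_i, a_j \in \alpha$. The contrapositive will be the workhorse: if there exists any commute from $a_i$ (a combinatorial closed walk with positive transition probability along every edge, with $a_i$ only at the endpoints) that avoids some $a_j$, then this commute contributes strictly positive probability to the event $\{\tau_{a_i} < \tau_{a_j}\}$ starting from $a_i$, contradicting $Q_{a_i,a_j}=1$.

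First I would establish that a commute $C$ from $a_k$ must visit every other element of $\alpha$ at least once: if $C$ skipped some $a_j$, then $C$ itself would be a commute from $a_k$ avoiding $a_j$, violating $Q_{a_k,a_j}=1$ by the principle above. Next, I would show each $a_j \in \alpha \setminus \{a_k\}$ appears in $C$ exactly once. If $a_j$ occurred at two distinct positions in $C$, then truncating $C$ to its first return to $a_j$ between those positions would yield a commute from $a_j$ sitting inside $C$, which therefore avoids $a_k$ (since $C$ avoids $a_k$ in its interior), contradicting $Q_{a_j,a_k}=1$.

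For uniqueness of the order, suppose two commutes $C_1, C_2$ from $a_k$ induced different orderings of $\alpha \setminus \{a_k\}$. Then some pair $a_p, a_q$ must appear with opposite relative order in $C_1$ and $C_2$. Extracting the subwalk of $C_1$ from $a_p$ to $a_q$ gives, by the exactly-once claim applied to $C_1$, a positive-probability walk $W_1$ from $a_p$ to $a_q$ whose interior avoids $a_p$, $a_q$, and $a_k$. Symmetrically, $C_2$ furnishes a walk $W_2$ from $a_q$ to $a_p$ with the analogous avoidance properties. The concatenation $W_1 W_2$ is then a commute from $a_p$ that avoids $a_k$, contradicting $Q_{a_p,a_k}=1$. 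Hence the ordering is determined by the graph.

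The main conceptual obstacle is the bridge from ``an event has probability one'' to ``no such combinatorial walk exists,'' which relies on noting that any walk realized with positive transition probabilities is itself a positive-probability sample path, so the presence of a single such walk suffices for a contradiction. A secondary bookkeeping point is verifying that the extracted subwalks $W_1$ and $W_2$ really do concatenate into a bona fide commute from $a_p$ --- in particular, that the joined walk does not silently revisit $a_p$ in its interior --- but this is immediate from the exactly-once property applied separately to $C_1$ and $C_2$.
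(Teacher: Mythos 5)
Your proof is correct and follows essentially the same three-step structure as the paper's: every $a_j$ is visited at least once (via $Q_{a_k,a_j}=1$), at most once (via $Q_{a_j,a_k}=1$), and in a fixed order (by splicing subwalks from two hypothetical commutes with discordant orderings to produce a commute that avoids a third element of $\alpha$). Your explicit remark about the bridge from ``probability one'' to ``no such walk exists'' and your care in checking the concatenation is a genuine commute are welcome elaborations, but the underlying argument is the paper's.
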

\begin{proof}
  The proof is in three assertions. We assume without loss of generality that commutes are from $a_1$.
  First, each $a_k$ is visited. This is the same as claiming that $Q_{a_1,a_k} = 1$, which was shown in the proof of~\cref{t:Metric}. 
   Second, each $a_k$ is visited at most once, since $Q_{a_k,a_1} = 1$.
   Lastly, each $a_k$ is visited in a fixed order: Let $J_1$ and $J_2$ be commutes from $a_1$ that visit, respectively, $a_2$ before $a_3$ and vice versa. Then let $J_1'$ and $J_2'$ be the sub-walks from $a_1$ to $a_2$ and from $a_2$ to $a_1$ in $J_1$ and $J_2$, respectively. The concatenation of $J_1'$ and $J_2'$ is thus a commute from $a_1$ that does not visit $a_3$, a contradiction.
\end{proof}
In the rest of~\cref{structure}, we assume that equivalence classes under $\sim$ are sorted so that they must be visited in the order by commutes from their first element. Similarly, if the $K$ elements of an equivalence class are numbered $a_1,\ldots,a_K$, we naturally identify $a_x = a_{x \!\!\!\mod\! K}$.
\begin{lemma}%
  \label{seg_lem}
  Given an equivalence class $\alpha$ under $\sim$, for each $j\in G - \alpha$ there is a unique $k$ such that all walks $J$ containing $j$ with $\alpha\cap J \ne \varnothing$ include either $a_{k-1}$ before $j$ or $a_{k}$ after $j$.
\end{lemma}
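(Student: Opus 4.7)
The plan is to define, for each $j \in G - \alpha$, the index $k$ in terms of the ``segment'' of the cyclic commute structure that contains $j$, and then to verify that $k$ is both well-defined and witnessed by every relevant walk. The workhorse I would need is a strengthening of the preceding lemma: starting from any $a_\ell \in \alpha$, the first element of $\alpha \setminus \{a_\ell\}$ hit by any walk that eventually reaches $\alpha$ must be $a_{\ell+1}$ (cyclic indexing). I would establish this first by observing that commutes from $a_\ell$, obtained by cyclically relabeling commutes from $a_1$, visit $a_{\ell+1}, a_{\ell+2}, \ldots, a_{\ell-1}$ in this fixed order; if some walk from $a_\ell$ first hit $\alpha \setminus \{a_\ell\}$ at $a_m \ne a_{\ell+1}$, irreducibility would let us extend it to a commute from $a_\ell$ that visits $a_m$ before $a_{\ell+1}$, contradicting the fixed order.

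To define $k$ for a given $j$, I would take any walk $W$ from $j$ that returns to $j$ and meets $\alpha$ (which exists by irreducibility). Let $a_{k-1}$ be the last $\alpha$-vertex appearing strictly before $j$ in $W$ and $a_{k'}$ the first $\alpha$-vertex appearing strictly after $j$. The sub-walks from $a_{k-1}$ to $j$ and from $j$ to $a_{k'}$ avoid $\alpha$ internally, so their concatenation is a walk from $a_{k-1}$ whose first $\alpha$-hit is $a_{k'}$; the workhorse then forces $k' = k$. For independence of the chosen walk, if another witness yielded $(k^\ast - 1, k^\ast)$ with $k^\ast \ne k$, splicing the $\alpha$-avoiding piece from $a_{k-1}$ to $j$ from the first witness onto the $\alpha$-avoiding piece from $j$ to $a_{k^\ast}$ from the second would produce a walk from $a_{k-1}$ first hitting $\alpha$ at $a_{k^\ast}$, again contradicting the workhorse.

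Finally I would verify the lemma's conclusion for an arbitrary walk $J$ with $j \in J$ and $\alpha \cap J \ne \varnothing$. If some $\alpha$-vertex lies in $J$ after $j$, let $a_r$ be the first such; the sub-walk from $j$ to $a_r$ inside $J$ avoids $\alpha$, and splicing it onto any $\alpha$-avoiding walk from $a_{k-1}$ to $j$ (which exists by the definition of $k$) yields a walk from $a_{k-1}$ first hitting $\alpha$ at $a_r$, so $r = k$ and thus $a_k$ appears in $J$ after $j$. Otherwise every $\alpha$-vertex of $J$ precedes $j$; letting $a_s$ be the last such and splicing the $\alpha$-avoiding piece from $a_s$ to $j$ onto any $\alpha$-avoiding walk from $j$ to $a_k$ shows $k = s + 1$, so $a_{k-1} = a_s$ appears in $J$ before $j$.

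The main obstacle I anticipate is cleanly formalizing the workhorse, since the preceding lemma was stated only for commutes from the distinguished first element and speaks about complete commutes rather than first-hitting times. Converting this to a first-hit statement valid from every $a_\ell$ requires a careful combination of cyclic relabeling, the Markov property, and an irreducibility-based extension argument; once that strengthening is in hand, the remainder is a bookkeeping exercise in splicing $\alpha$-avoiding sub-walks.
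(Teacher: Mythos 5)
Your proposal is correct and takes essentially the same approach as the paper's proof: both hinge on the preceding lemma's fixed commute order through $\alpha$, use strong connectivity to extend $\alpha$-avoiding sub-walks into commutes, and derive a contradiction when two candidates disagree. Your ``workhorse'' (the first $\alpha$-hit from $a_\ell$ is $a_{\ell+1}$) is simply a clean repackaging of the splicing step the paper performs directly on a pair of paths $J_1,J_2$ from $j$, and your subsequent well-definedness and verification arguments mirror the paper's uniqueness-of-successor, uniqueness-of-predecessor, and adjacency observations.
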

\begin{proof}
  It is enough to consider paths. Suppose $J_1$ and $J_2$ are two paths from $j$ which reach $a_k$ and $a_{k'}$, respectively, before reaching any other elements of $\alpha$, with $k\ne k'$. By strong connectivity, we can select a shortest path from $a_k$ to $j$ to extend $J_1$ to a cycle from $j$, which we call $J_3$. That is, if we select a shortest (in number of distinct steps) path, $\gamma$, from $a_k$ to $j$ then $J_1 \cup \gamma= J_3$ is the required extension of $J_1$.

  Now, $J_3$ can be cyclically reordered to be a commute from $a_k$. Thus, $J_3$ includes $a_{k'}$, and since $\gamma$ was shortest possible, it includes $a_{k'}$ exactly once. Let $J_4 \subset J_3$ be the sub-walk from $a_{k'}$ to $j$. Then concatenating $J_4$ and $J_2$ gives a commute from $a_{k'}$ that does not include $a_k$, a contradiction. Thus, $j$ has a unique successor $a_k$ in $\alpha$.

  The conclusion that there is a unique predecessor of $j$ in $\alpha$ follows by reversing the direction of all edges and re-applying the above argument. It must be $a_{k-1}$ since $a_k$ is the first member of the equivalence class encountered in any commute from $j$.  \end{proof}

\begin{definition}
  We will here refer to the equivalence classes on $G-\alpha$ induced by \cref{seg_lem} as \emph{($\alpha$-) segments} of $G$.\footnote{Alternatively, we could define segments more generally with respect to any node set $\alpha$. Then the segment corresponding to $i \in \alpha$ is the set of nodes reachable from $i$ without passing through any other elements of $\alpha$. From this perspective, the absolute segments described later are simply the intersection of the segments with respect to all the equivalence classes.}
\end{definition}
\begin{lemma}
  Given distinct equivalence classes $\alpha$ and $\beta$ under $\sim$, every element of $\alpha$ must lie within a single segment induced by $\beta$.
\end{lemma}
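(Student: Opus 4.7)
My plan is proof by contradiction. Suppose two elements of $\alpha$ lie in different $\beta$-segments; using the cyclic order on $\alpha$, I may shift indices so that two consecutive elements $a_i, a_{i+1}$ already have distinct $\beta$-successors $b_s := b_{\sigma(i)}$ and $b_t := b_{\sigma(i+1)}$ with $s \ne t$. The goal is to push this all the way to $a_i \sim b_s$, which contradicts $\alpha \ne \beta$.

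The first step is to show that no walk from $a_i$ to $a_{i+1}$ can avoid $\beta$: any such walk, concatenated with a walk from $a_{i+1}$ whose first $\beta$-hit is $b_t$, would yield a walk from $a_i$ whose first $\beta$-hit is $b_t$, contradicting $a_i$'s segment. The symmetric statement rules out walks from $a_{i+1}$ to $a_i$ that avoid $\beta$. Since $Q_{a_i,a_{i+1}} = 1$, every commute from $a_i$ visits $a_{i+1}$ and must therefore hit $\beta$ both before and after this visit; by~\cref{seg_lem} the first $\beta$-hit before $a_{i+1}$ must be $b_s$ and the first $\beta$-hit after must be $b_t$. Hence the visit counts $V^{b_s}_{a_i}$ and $V^{b_t}_{a_i}$ in a commute from $a_i$ are each at least $1$ almost surely. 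Combining with $\E_{a_i}[V^{b_s}_{a_i}] = \phi_{b_s}/\phi_{a_i}$ from the proof of~\cref{t:KeyIdentity} and the invariance of $\phi$ within an equivalence class (Observation following~\cref{t:Metric}), I conclude $\phi_\beta \ge \phi_\alpha$, writing $\phi_\alpha, \phi_\beta$ for the common $\phi$-values on the two classes.

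The second step runs the same argument with the roles of $\alpha$ and $\beta$ exchanged. Because the sub-walk from $a_i$ to $a_{i+1}$ inside a commute from $a_i$ contains no other $\alpha$-element, the first $\alpha$-hit after $b_s$ is $a_{i+1}$; by the uniqueness in~\cref{seg_lem} this pins down the $\alpha$-successor index $\tau(s) = i+1$, and an identical analysis of $b_t$ gives $\tau(t) = i+2$. Since $L \ge 2$ these are distinct, so $b_s, b_t$ lie in different $\alpha$-segments. Replaying step one verbatim with $(b_s, b_t)$ in place of $(a_i, a_{i+1})$ and $\alpha \leftrightarrow \beta$ then gives $\phi_\alpha \ge \phi_\beta$.

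The two inequalities combined force $\phi_\alpha = \phi_\beta$, so $\E_{a_i}[V^{b_s}_{a_i}] = 1$; since also $V^{b_s}_{a_i} \ge 1$ almost surely, it equals $1$ almost surely, giving $Q_{a_i,b_s} = 1$. \cref{t:KeyIdentity} now forces $Q_{b_s,a_i} = 1$ as well, so $a_i \sim b_s$, contradicting the hypothesis that $\alpha$ and $\beta$ are distinct. The main obstacle I anticipate is the bookkeeping in the second step---juggling the cyclic orderings on $\alpha$ and on $\beta$ simultaneously and verifying that every ingredient of the first step has a literal counterpart under the role-reversal---rather than any essentially new idea.
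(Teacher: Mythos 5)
Your proof is correct in spirit and takes a genuinely different route from the paper's. The paper argues by a direct combinatorial construction: assuming $a_1$ and $a_2$ lie in different $\beta$-segments, it splices together four shortest paths (through $a_1$'s segment, between appropriate $\beta$-nodes, and across $a_2$'s segment while avoiding $a_2$) to exhibit a commute from $a_1$ that never visits $a_2$, contradicting $Q_{a_1,a_2}=1$. Your argument is instead probabilistic and analytic: you show that every commute from $a_i$ must hit $b_s$, use $\E_{a_i}[V^{b_s}_{a_i}]=\phi_{b_s}/\phi_{a_i}$ to deduce $\phi_\beta\ge\phi_\alpha$, run the role-reversed argument to force equality, and then invoke \cref{t:KeyIdentity} to conclude $Q_{a_i,b_s}=Q_{b_s,a_i}=1$, so $a_i\sim b_s$, contradicting $\alpha\ne\beta$. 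The paper's route is more elementary; yours exposes a pleasing consequence (degenerate classes whose segments interleave would have to share the same invariant mass) and re-uses machinery already built for the metric theorem.

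One bookkeeping point deserves care before this can stand as written. In the role-reversal you propose to ``replay step one verbatim with $(b_s,b_t)$ in place of $(a_i,a_{i+1})$,'' but step one crucially used that $a_i$ and $a_{i+1}$ are \emph{adjacent} in the cyclic order on $\alpha$, so that the sub-walk of a commute from $a_i$ up to $a_{i+1}$ contains no other $\alpha$-element. There is no reason for $b_s=b_{\sigma(i)}$ and $b_t=b_{\sigma(i+1)}$ to be adjacent in $\beta$: several $\beta$-elements can be visited between $a_i$ and $a_{i+1}$. You do get that $b_s$ and $b_t$ have distinct $\alpha$-successors (your argument for this is sound once one replaces the specific claim $\tau(t)=i+2$ with the correct $\tau(t)\ne i+1$, since $b_t$ is visited strictly after $a_{i+1}$ in a commute from $a_i$ and $a_{i+1}$ is never revisited), but to replay step one you must first locate a pair of \emph{consecutive} $\beta$-elements $b_m,b_{m+1}$ with distinct $\alpha$-successors. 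This follows by a pigeonhole argument along the cyclic arc from $b_s$ to $b_t$: the $\alpha$-successor function cannot be constant on that arc since it differs at the endpoints, so it must change at some adjacent pair. With that interpolation step inserted, and the $i+2$ claim softened as above, the argument closes.
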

\begin{proof}
  Let $\alpha = \{a_1,a_2,\ldots,a_{K_{\alpha}}\}$ and $\beta = \{b_1,b_2,\ldots,b_{K_{\beta}}\}$.
  Suppose, by way of contradiction, that $a_1$ lies between $b_{k_1}$ and $b_{k_1+1}$ and $a_2$ lies between $b_{k_2}$ and $b_{k_2+1}$ for $k_1 \ne k_2$. By strong connectedness, there exists a (shortest) path from $b_{k_1}$ to $a_1$ to $b_{k_1+1}$. If $b_{k_1+1}$ and $b_{k_2}$ are distinct nodes, there also exists a shortest path from $b_{k_1+1}$ to $b_{k_2}$. Since $Q_{b_{k_2},a_2} < 1$, there exists a shortest path from $b_{k_2}$ to $b_{k_2+1}$ not passing though $a_2$. Finally, if $b_{k_2+1}$ and $b_{k_1}$ are distinct nodes, there exists a shortest path from $b_{k_2+1}$ to $b_{k_1}$. Concatenating all these paths gives a commute from $a_1$ to itself not passing though $a_2$, a contradiction.
\end{proof}
The foregoing lemmata show that the nontrivial equivalence classes in a $d^{\sfrac12}$-degenerate digraph induce a structure of equivalence cycles and their segments, with distinct equivalence cycles restricted to lie within the segments of each other. This has potential application in segmentation of directed graphs and will be an important technical tool in the proofs in the next subsection.

\subsection{Quotients of \texorpdfstring{$d^{\sfrac12}$}{d}-degenerate Markov chains}%
\label{quotients}
Next, we develop a way to transform a Markov chain $X$ for which $d^{\sfrac12}$ is not a metric into a quotient Markov chain $X'$, for which $d^{\sfrac12}$ is a metric. 
\begin{rem}
  In~\cref{quotients}, we identify singleton classes with their member.
  Additionally, we append a prime to any symbol when it is meant to refer to $X'$ rather than $X$.
\end{rem}
The quotient graph is given by the following construction, which has appeared in~\cite{mitavskiy08} as well as in~\cite{madras_2002_decomposition,martin_2000_staircase,caracciolo_1992_tempering}, and possibly other places. 
\begin{definition}%
  \label{d:quotient}
  Given a Markov chain $X$ and an equivalence relation on the states of $X$, the \emph{quotient Markov chain} has one state for each equivalence class, and the transition probabilities are given by 
  \[
    P_{U,V}' = \frac{1}{\phi_U} \sum_{i\in U} \phi_i P_{i,V}
    =  \frac{1}{\phi_U} \sum_{i\in U} \sum_{j\in V} \phi_i P_{i,j}\,,
  \]
  where $\phi_U = \sum_{i \in U} \phi_i$.
\end{definition}
The map that sends $X$ to $X'$ is denoted $\iota$.
It can be shown~\cite{mitavskiy08} that the invariant measure on $X'$ evaluated at state $U$ is $\phi'_U = \phi_U$. 
Furthermore, $P$ carries information about the equilibration rate in ergodic chains~\cite{martin_2000_staircase,madras_2002_decomposition,caracciolo_1992_tempering}, although we do not use this fact in this paper.
When applying~\cref{d:quotient} to $\sim$, the definition reduces to
\[
  P_{U,V}' =  \frac{1}{|U|} \sum_{i\in U} \sum_{j\in V} P_{i,j}\,,
\]
since $\phi$ is constant within equivalence classes (see proof of~\cref{t:Metric}).
\begin{lemma}[Quotienting one class at a time]
  Let $\sim$ induce the non-singleton classes $\{\alpha_1,\alpha_2,\ldots,\alpha_L\}$. For a node set $S$, let $\sim_{S}$ be the relation with non-singleton class $S$, keeping all other nodes in individual (singleton) classes. One can then produce a graph with the same nodes as $P'$ by performing a series of quotienting operations $P \underset{\sim_{\alpha_1}}{\to} P_1 \underset{\sim_{\alpha_2}}{\to} \cdots \underset{\sim_{\alpha_L}}{\to} P_L$. Then $P_L = P'$, after identifying nested classes with the nodes in them, \eg,  $\{\{a\},\{b\}\} \to \{a,b\}$.%
  \label{l:order}
\end{lemma}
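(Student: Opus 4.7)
The plan is to proceed by induction on $L$, reducing the statement to a single fundamental compatibility identity for the quotienting operation defined in~\cref{d:quotient}: if $\sim_a$ and $\sim_b$ are equivalence relations on the states of $X$ with $\sim_b$ refining $\sim_a$ (\ie, every $\sim_b$-class is contained in some $\sim_a$-class), then the two-step quotient $X \to X/\sim_b \to (X/\sim_b)/(\sim_a/\sim_b)$ agrees with the direct one-step quotient $X \to X/\sim_a$. Once this compatibility is in hand, the lemma follows immediately because the join $\sim_{\alpha_1} \vee \sim_{\alpha_2} \vee \cdots \vee \sim_{\alpha_L}$ (which, since the $\alpha_k$ are pairwise disjoint, is just the relation with non-singleton classes $\alpha_1,\ldots,\alpha_L$) is exactly $\sim$.

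To verify the compatibility identity, I would substitute the definition of the first (finer) quotient into the definition of the second (coarser) quotient and simplify. Writing $A$ for a $\sim_b$-class, $U$ for the $\sim_a$-class containing $A$, and using the cited fact $\phi'_A = \phi_A$ for the invariant measure of the quotient chain, the two appearances of $1/\phi_A$ cancel and the telescoped sum $\sum_{A \subseteq U} \sum_{i \in A}$ collapses to $\sum_{i \in U}$ because the $\sim_b$-classes partition $U$. What remains is precisely the formula of~\cref{d:quotient} applied directly to $\sim_a$.

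The induction itself is then bookkeeping. At step $k$, I would first check that $\alpha_{k+1},\ldots,\alpha_L$ are still meaningful node sets in $P_k$: this holds because the equivalence classes of $\sim$ are pairwise disjoint, so none of the later $\alpha_j$ was absorbed by an earlier quotient, and each of their members still corresponds to a singleton class of $P_k$. Applying the compatibility identity with $\sim_b$ the relation producing $P_k$ from $P$ and $\sim_a$ the relation obtained by additionally collapsing $\alpha_{k+1}$ gives $P_{k+1} = P/(\sim_{\alpha_1} \vee \cdots \vee \sim_{\alpha_{k+1}})$, closing the induction. At $k=L$ the right-hand side is $P/\!\sim{} = P'$, after the natural identification of nested classes with their constituent nodes.

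The main obstacle, and essentially the only nontrivial ingredient, is the preservation of the invariant measure under quotienting, \ie, the equality $\phi'_A = \phi_A$; this is already quoted from~\cite{mitavskiy08} just before the lemma statement, so I would simply invoke it. The remaining work—the commutativity of the normalizing-constant bookkeeping, the observation that disjoint equivalence classes survive sequential quotienting, and the identification of nested classes—is routine and requires no new ideas.
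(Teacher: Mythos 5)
Your proposal is correct and follows essentially the same route as the paper: induction on the number of non-singleton classes, with the inductive step reduced to the compatibility of quotienting by one class and then by the rest versus quotienting by all classes at once, verified by substituting the definition of the two-step quotient, cancelling the normalization factors via $\phi'_A = \phi_A$, and telescoping the double sum. The paper carries out this cancellation inline through explicit casework on whether each index equals the first collapsed class, whereas you abstract it as a general refinement-compatibility identity, but the underlying calculation is identical.
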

\begin{proof}
  The proof is by induction on $L$. If $L=1$, the result is vacuously true. So assume the result is true for graphs having $L$ non-singleton equivalence classes, and we proceed to establish the result for $L+1$ non-singleton classes. Let $G$ have classes $\{\alpha_1,\ldots,\alpha_{L+1}\}$ under $\sim$. Then we apply $\sim_{\alpha_1}$ to get $P_1$ and then use the inductive assumption to conclude that $P_{L+1} = P_1'$. So we need to prove that $P_1'=P'$. We have
  \[
    {P_1'}_{\alpha,\beta} = 
    \begin{cases}
      P_{\alpha,\beta}                                    &   \alpha\ne\alpha_1,\beta\ne\alpha_1 \\
      \sum_{j\in\beta} {P_1}_{\alpha_1,j}                 &   \alpha=\alpha_1,\beta\ne\alpha_1    \\
      \frac1{|\alpha|} \sum_{i\in\alpha} {P_1}_{i,\alpha_1} &   \alpha\ne\alpha_1,\beta=\alpha_1\\
      {P_1}_{\alpha_1,\alpha_1}                             &   \alpha=\alpha_1=\beta\, ,
    \end{cases}
  \]
  where we have implicitly used the fact that $\phi_{P_1}$ has the form given in~\cref{l:order}.
  Expanding further gives
  \[
    {P_1'}_{\alpha,\beta} = 
    \begin{cases}
      P_{\alpha,\beta}                                                &   \alpha\ne\alpha_1,\beta\ne\alpha_1 \\
      \sum_{j\in\beta} \frac1{|\alpha_1|}\sum_{i\in\alpha_1} P_{i,j}  &   \alpha=\alpha_1,\beta\ne\alpha_1    \\
      \frac1{|\alpha|} \sum_{i\in\alpha} \sum_{j\in\alpha_1} P_{i,j}  &   \alpha\ne\alpha_1,\beta=\alpha_1\\
      \frac1{|\alpha_1|} \sum_{i\in\alpha_1,j\in\alpha_1} P_{i,j}     &   \alpha=\alpha_1=\beta\,.
    \end{cases}
  \]
  Rearranging sums
  \[
    {P_1'}_{\alpha,\beta} = 
    \frac1{|\alpha|} \sum_{i\in\alpha,j\in\beta} P_{i,j} = P_{\alpha,\beta}\,,
  \]
  as expected.
\end{proof}
\begin{lemma}%
  \label{l:single-collapse}
  Collapsing a single equivalence class $\alpha$ respects $Q$ in the following sense. Let $i$ and $j$ be two non-equivalent nodes.
  \begin{itemize}
    \item If $i$ and $j$ lie in the same $\alpha$-segment, then $Q_{i,j} = Q'_{i,j}$.
    \item If $i$ and $j$ lie in different $\alpha$-segments, then $\frac{1}{2} Q_{i,j}  < Q'_{i,j} < Q_{i,j}$.
    \item If $i\in\alpha$, then $Q_{i,j} = |\alpha| Q'_{\alpha,j}$.
    \item If $j\in\alpha$, then $Q_{i,j} = Q'_{i,\alpha}$.
  \end{itemize}
\end{lemma}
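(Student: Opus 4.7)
The plan is to exploit the segment structure of $\alpha$-degenerate digraphs developed in \cref{structure}. Let $s_k$ denote the $\alpha$-segment with predecessor $a_{k-1}$ and successor $a_k$. Two structural facts drive everything: (i) for $v \in s_k$, every walk from $v$ to $\alpha$ first enters $\alpha$ at $a_k$; and (ii) starting from $a_m$, with probability one the walk returns to $a_m$ by deterministically traversing segments $s_{m+1}, s_{m+2}, \ldots, s_m$ in cyclic order, visiting each other $a_l$ exactly once along the way. Together these reduce every case to a short excursion-probability calculation.

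The last two bullets fall out immediately. For the fourth bullet, fact~(i) gives that from $i \in s_m$ the first $\alpha$-hit is $a_m$, so $Q_{i,a_m} = Q'_{i,\alpha}$ by inspection of definitions; for any other $a_l \in \alpha$, fact~(ii) forces the walk from $a_m$ to reach $a_l$ through $s_{m+1}, \ldots, s_l$ (none of which contains $i$) before any possible return to $i$, giving $Q_{i,a_l} = Q_{i,a_m}$. For the third bullet, the definition of $P'_{\alpha,\cdot}$ makes a step from $\alpha$ in $X'$ a uniform mixture over single steps from $a_1,\ldots,a_K$ in $X$; only $a_{m-1}$ has positive probability of entering $s_m$, and the factor $|\alpha|^{-1}$ matches the fact that a commute from any $a_k \in \alpha$ in $X$ passes through $a_{m-1}$ exactly once, yielding $Q_{i,j} = |\alpha|\,Q'_{\alpha,j}$.

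For the first two bullets I would set up parallel first-excursion recurrences for $\pi$ (in $X$) and $\pi'$ (in $X'$), each the probability of hitting $j$ before $i$ starting from $\alpha$. Define $\eta_k := \sum_{v \in s_k} P_{a_{k-1},v}\,\mu^v$, where $\mu^v$ is the probability of hitting the relevant target (either $i$ or $j$) before $\alpha$ starting from $v \in s_k$. In $X$ an $\alpha$-excursion deterministically tours all segments in cyclic order; in the second bullet ($i \in s_m$, $j \in s_n$ with $n \ne m$ and $s_n$ preceding $s_m$ in the cyclic order starting at $a_m$), one round trip hits $j$ first with probability $\eta_n$ and hits $i$ first with probability $(1-\eta_n)\eta_m$, and a geometric sum over round trips yields $\pi = \eta_n/(\eta_n + \eta_m - \eta_n\eta_m)$. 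In the first bullet (both $i,j \in s_m$) the analogous computation collapses to $\pi = B_j/(B_j+B_i)$ with $B_i,B_j$ the obvious segment-weighted hitting probabilities. In $X'$ the first step from $\alpha$ enters a single randomly-selected segment; excursions into segments disjoint from $\{i,j\}$ return to $\alpha$ with probability one and factor out of the recurrence, giving $\pi' = B_j/(B_j+B_i)$ in the first bullet (hence $Q_{i,j} = Q'_{i,j}$) and $\pi' = \eta_n/(\eta_n + \eta_m)$ in the second. For the second bullet the ratio $Q'_{i,j}/Q_{i,j} = 1 - \eta_n\eta_m/(\eta_n+\eta_m)$ lies strictly between $\sfrac12$ and $1$ because $\eta_n,\eta_m \in (0,1)$: strict positivity follows from strong connectivity, while $\eta_k < 1$ holds since $\eta_k$ equals $Q_{a_{k-1},\,\mathrm{target}}$ and the target is not $\sim$-equivalent to $a_{k-1}$.

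The main obstacle is careful bookkeeping rather than algebra: in $X$ the walk is forced to execute one deterministic tour of all segments between consecutive $a_m$-visits, whereas in $X'$ it samples a single segment weighted by $P'_{\alpha,\cdot}$. The segment-decomposition lemmata of \cref{structure} are precisely what allow one to decouple contributions from distinct segments and reduce each case to the pair of elementary recurrences above.
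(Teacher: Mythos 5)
Your argument follows the same plan as the paper's proof: decompose by first-excursion from $\alpha$, reduce to segment-local hitting probabilities using the tour structure of \cref{structure}, and contrast the deterministic cyclic tour in $X$ with the uniformly-mixed segment selection in $X'$. With the translation $\eta_n = 1-x$ and $\eta_m = y$ (where $x,y$ are the paper's $Q_{a,b,j}$ and $Q_{b,i,a}$), your closed forms $\pi = \eta_n/(\eta_n + \eta_m - \eta_n\eta_m)$ and $\pi' = \eta_n/(\eta_n + \eta_m)$ are identical to the paper's expressions for $Q_{a,j,i}$ and $Q'_{\alpha,j,i}$, and the ratio you compute matches \cref{rat}. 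Your elaborations of the third and fourth bullets are somewhat more heuristic than the paper's explicit first-step decomposition for the $i\in\alpha$ case, but the underlying mechanism (only $a_{m-1}$ feeds $s_m$, and the $|\alpha|^{-1}$ weight in $P'_{\alpha,\cdot}$) is the right one.

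There is a genuine gap in your justification of the strict lower bound. You claim $\eta_k < 1$ ``since $\eta_k$ equals $Q_{a_{k-1},\mathrm{target}}$ and the target is not $\sim$-equivalent to $a_{k-1}$.'' This inference is not valid: $u \sim v$ requires \emph{both} $Q_{u,v} = 1$ \emph{and} $Q_{v,u} = 1$, so $Q_{a_{k-1},j} = 1$ is perfectly compatible with $j \not\sim a_{k-1}$ whenever $Q_{j,a_{k-1}} < 1$. Concretely, if $s_n$ has a choke node — say $s_n = \{v,j\}$ with edges $a_{n-1}\to v \to j$, $j\to v$, $j\to a_n$ — then every $a_{n-1}\leadsto a_n$ excursion passes through $j$ (so $\eta_n = 1$), yet $j$ can recur via $v$ before hitting $\alpha$, so $j \not\sim a_{n-1}$. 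Installing the mirror configuration in $s_m$ gives $\eta_n = \eta_m = 1$ with $i \not\sim j$, and then your ratio gives $Q'_{i,j} = \tfrac12 Q_{i,j}$, so the strict inequality you (and the lemma statement) assert degenerates to equality. The paper's own proof carries the same boundary ambiguity — it invokes $g(x_1,x_2) < \tfrac12$ on the open square $(0,1)^2$ without arguing that $1-x$ and $y$ cannot simultaneously reach $1$ — but that does not rescue your specific stated reason for $\eta_k < 1$, which is simply false. To repair the strictness you would need an argument that $\eta_n$ and $\eta_m$ cannot \emph{both} equal $1$ when $i \not\sim j$, not that each is individually less than $1$.
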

\begin{proof}
  Let $\alpha=\{a_1,\ldots,a_K\}$, where $K = |\alpha|$. It is clear that $Q_{i,j}$ is unaffected by taking quotients if $i$ and $j$ lie in the same $\alpha$-segment or if $j\in\alpha$. 
  
  For $i=a_k\in\alpha$, we know that $Q_{a_k,j} = Q_{a_{\ell},j}$ for all $\ell$, so WLOG assume that $j$ lies in the segment between $a_k$ and $a_{k+1}$.
  Now, let us denote by $Q_{i_1,i_2,i_3}$ the probability of a random walker starting at $i_1$ and reaching $i_2$ before reaching $i_3$ (in particular, $Q_{i,j} = Q_{i,j,i}$). Then,
  \begin{align*}
  Q'_{\alpha,j}
  &= P'_{\alpha,j} + \sum_{i'\ne j,\alpha} P'_{\alpha,i'} Q'_{i',j,\alpha}
  = \frac1K P_{i,j} + \frac1K \sum_{\ell=1}^K \sum_{i'\ne j, i'\notin \alpha} P_{a_{\ell},i'} Q_{i',j,\alpha} \\ 
  &= \frac1K P_{i,j} + \frac1K \sum_{i'\ne j, i' \notin \alpha} P_{a_k,i'} Q_{i',j,\alpha}
  = \frac1K P_{i,j} + \frac1K \sum_{i'\ne j,i} P_{i,i'} Q_{i',j,i}
  = \frac1K Q_{i,j}\,.
  \end{align*}

  Finally let $i$ and $j$ be such that any path from $i$ to $j$ must pass through $a_1,\ldots,a_k$ before encountering $j$. Then the following reasoning applies. 
  Let $a=a_1,b=a_K$, $x = Q_{a,b,j}$ and $y = Q_{b,i,a}$. Then we have
  \begin{align*}
    Q_{i,j} &= Q_{i,a,i} Q_{a,j,i}\,, \\
    Q_{a,j,i} &= (1-x) + x Q_{b,j,i}\,, \\
    Q_{b,j,i} &= (1-y)Q_{a,j,i}\,.
  \end{align*}
  Solving for $Q_{i,j}$ yields
  \[
    Q_{i,j} = Q_{i,a}\frac{1-x}{1-x+xy}\,.
  \]
  On the quotiented graph, we also have
  \begin{align*}
    Q'_{i,j} = Q_{i,a} Q'_{\alpha,j,i}, \ \
    Q'_{\alpha,j,i} &= \frac{1}{2} \left[ 1-x + x Q'_{\alpha,j,i}\right]
    + \frac{1}{2} \left[ (1-y) Q'_{\alpha,j,i} \right]\,.
  \end{align*}
  Hence,
  $
    Q'_{i,j} = Q_{i,a}\frac{1-x}{1-x+y}
  $
  and thus $Q'_{i,j} < Q_{i,j}$. 

  Furthermore, we can bound the ratio
  \begin{equation}
    \frac{Q_{i,j}}{Q'_{i,j}} = \frac1{1-\frac{(1-x)y}{1-x+y}}\,.
    \label{rat}
  \end{equation}
  Since the function $g(x_1,x_2) = \frac{x_1x_2}{x_1+x_2}$ is bounded above by $\frac{1}{2}$ on ${(0,1)}^2$,~\cref{rat} cannot exceed $2$, which gives the bound. 
  The bound is tight because all values of $x$ and $y$ can be attained when considering arbitrary weighted graphs. (A graph with only the four nodes $a,b,i,j$ mentioned in the proof and edges $a\to j$, $a\to b$, $j\to b$, $b\to i$, $b\to a$, and $i\to a$ suffices to attain all possible values of $x,y$.) 
\end{proof}
\begin{definition}
  An \emph{absolute segment} is a maximal set of nodes which lie in the same segment with respect to all non-singleton equivalence classes.
\end{definition}
\begin{lemma}
  $\iota$ respects $Q$ in the following sense for nodes $i$ and $j$ in distinct equivalence classes $\alpha$ and $\beta$:
  \begin{itemize}
    \item If $i$ and $j$ lie in the same absolute segment, then $Q_{i,j} = Q'_{i,j}$.
    \item Otherwise, $\frac{1}{2^{c}|\alpha|} Q_{i,j} \le Q'_{\alpha,\beta} < Q_{i,j}$, where $c$ is the number of equivalence classes with respect to which $i$ and $j$ lie in different segments. (In particular, $c<L$.) Equality holds only when $c=0$.
  \end{itemize}
\end{lemma}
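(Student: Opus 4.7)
The plan is to reduce the full quotient to a sequence of single-class collapses via \cref{l:order}, and then track the multiplicative change in $Q$ at each step via \cref{l:single-collapse}. Explicitly, write $\iota$ as $P = P_0 \to P_1 \to \cdots \to P_L = P'$, where step $k$ collapses the non-singleton class $\alpha_k$. Let $[i]_k, [j]_k$ denote the images of $i, j$ in $P_k$. Applying \cref{l:single-collapse} at step $k$, the ratio $r_k := Q^{(k)}_{[i]_k,[j]_k}/Q^{(k-1)}_{[i]_{k-1},[j]_{k-1}}$ falls into one of four cases: $r_k = 1$ if $\alpha_k \notin \{\alpha,\beta\}$ and $[i]_{k-1}, [j]_{k-1}$ lie in the same $\alpha_k$-segment; $r_k \in (\sfrac12, 1)$ if $\alpha_k \notin \{\alpha,\beta\}$ and they lie in different $\alpha_k$-segments; $r_k = 1/|\alpha|$ if $\alpha_k = \alpha$; and $r_k = 1$ if $\alpha_k = \beta$.

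The key technical point, which I anticipate to be the main obstacle, is to verify that the segment classification of $[i]_{k-1}, [j]_{k-1}$ with respect to $\alpha_k$ in $P_{k-1}$ is intrinsic, i.e., independent of the chosen order of collapses. By the structure theory of \cref{structure}, any other equivalence class $\alpha_{k'}$ lies entirely within a single $\alpha_k$-segment, so collapsing $\alpha_{k'}$ only identifies nodes that already shared an $\alpha_k$-segment. Consequently, the $\alpha_k$-segments in $P_{k-1}$ are in canonical bijection with those in the original $P$, and whether $[i]_{k-1}, [j]_{k-1}$ lie in the same one is equivalent to whether $i, j$ do in $P$. Thus the count $c$ of classes with respect to which $i, j$ lie in different segments is an intrinsic invariant of the pair. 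Once this is in place, \cref{l:single-collapse} applies at each stage without modification since the relevant hypotheses are all preserved.

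Multiplying the factors yields $Q'_{\alpha,\beta} = \tfrac{1}{|\alpha|}\bigl(\prod_k r_k\bigr) Q_{i,j}$, where the product runs over the $c$ different-segment classes. In Case 1 ($i$ and $j$ in a common absolute segment), the definition of absolute segment forces both $\alpha$ and $\beta$ to be singletons (a node in a non-singleton class lies in no segment of that class) and forces $c = 0$, yielding $Q'_{i,j} = Q_{i,j}$. In Case 2, at least one of $|\alpha| \ge 2$ or $c \ge 1$ must hold; since each $r_k < 1$ strictly and $1/|\alpha| \le 1$, the upper bound $Q'_{\alpha,\beta} < Q_{i,j}$ is strict. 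For the lower bound, each $r_k > \sfrac12$ strictly, so $\prod_k r_k > 1/2^c$ strictly when $c \ge 1$, while the product equals $1$ when $c = 0$; in either case $Q'_{\alpha,\beta} \ge \frac{1}{2^c |\alpha|} Q_{i,j}$, with equality if and only if $c = 0$, as claimed.
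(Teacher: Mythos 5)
Your decomposition is the same one the paper's proof uses: factor $\iota$ into a sequence of single-class collapses via \cref{l:order} and multiply the per-step ratios from \cref{l:single-collapse}. You go further than the paper in one place that genuinely needs it --- the paper silently assumes that ``lying in the same/different $\alpha_k$-segment'' is a well-defined property of $(i,j)$ independently of which classes have already been collapsed, and you supply the argument (other equivalence classes sit wholly inside a single $\alpha_k$-segment, so collapsing them never merges nodes across segment boundaries, giving a canonical bijection of segments throughout the sequence). That is a useful addition and the paper's one-paragraph proof would be stronger with it.

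However, there is a small but real error in your Case-2 analysis, and it happens to expose a gap in the lemma itself. You claim that if $i$ and $j$ are not in a common absolute segment then ``at least one of $|\alpha|\ge 2$ or $c\ge 1$ must hold.'' This is not true: the negation of the Case-1 condition also covers $|\alpha|=1$, $|\beta|\ge 2$, $c=0$ (take $i$ non-degenerate, $j$ degenerate, and $i,j$ in the same $\gamma$-segment for every other non-singleton $\gamma$). In that configuration every ratio $r_k$ in your factorization equals $1$ --- the collapse of $\beta$ contributes $1$ by the third/fourth bullet of \cref{l:single-collapse}, and every other collapse contributes $1$ because $c=0$ --- and $1/|\alpha|=1$, so $Q'_{\alpha,\beta}=Q_{i,j}$. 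The strict upper bound $Q'_{\alpha,\beta}<Q_{i,j}$ therefore fails. The paper's proof does not handle this either (its opening clause ``if $i$ is degenerate, first collapse $\alpha$'' quietly assumes $|\alpha|\ge 2$), so you have essentially reproduced an edge case that the authors overlooked; a correct statement would either restrict to $|\alpha|\ge 2$ or weaken the upper inequality to $\le$ with equality characterized by $c=0$ and $|\alpha|=1$. Aside from this, your bookkeeping of the factors and the identification of when equality is attained in the lower bound are correct.
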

\begin{proof}
  If $i$ is degenerate, first collapse $\alpha$, scaling $Q_{i,j}$ by $|\alpha|$. Next, collapse all other equivalence classes one at a time, further scaling $Q_{i,j}$ by the appropriate factor in $(\frac{1}{2},1)$ whenever $i$ and $j$ lie in different segments with respect to the collapsing class.
\end{proof}
From this lemma we immediately get the following theorem.
\begin{theorem}\label{t:12bounds}
  $X'$ is a metric space with metric $(d')^{\sfrac12}$. In particular, for $i\in\alpha$ and $j\in\beta$, with $\alpha\ne\beta$:
  \begin{itemize}
    \item If $i$ and $j$ lie in the same absolute segment, then $d^{\sfrac12}_{i,j} = (d')_{i,j}^{\sfrac12}$.
    \item Otherwise $d_{i,j} < d_{\alpha,\beta}' \le d_{i,j} + \frac{1}{2} \log{|\alpha||\beta|} + c\log 2$, where $c$ is the number of equivalence classes with respect to which $i$ and $j$ lie in different segments. Equality holds only when $c=0$. 
  \end{itemize}
\end{theorem}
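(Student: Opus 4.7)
The plan is to derive \cref{t:12bounds} as an essentially mechanical consequence of the preceding $Q$-bound lemma, using the identity $d^{\sfrac12}(i,j) = -\log\sqrt{Q_{i,j}Q_{j,i}}$ from \cref{l:symmetric} to translate $Q$-estimates into distance estimates by taking $-\tfrac12\log$.

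First I would verify that \cref{t:Metric} can be applied to $X'$. The quotient chain inherits irreducibility from $X$, since any walk from $i$ to $j$ in $X$ projects to a walk from $\iota(i)$ to $\iota(j)$ in $X'$; moreover $X'$ has a well-defined invariant distribution $\phi'_U=\phi_U$ as recalled after \cref{d:quotient}. Consequently $(d')^{\sfrac12}$ is at least a pseudo-metric on the states of $X'$. To promote this to a genuine metric, I would show that distinct classes $\alpha\neq\beta$ satisfy $(d')^{\sfrac12}(\alpha,\beta)>0$. Fix representatives $i\in\alpha$, $j\in\beta$; since $i\not\sim j$, we have $Q_{i,j}Q_{j,i}<1$. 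The preceding lemma gives either $Q'_{i,j}Q'_{j,i}=Q_{i,j}Q_{j,i}$ (when $i,j$ share an absolute segment, in which case $\alpha$ and $\beta$ are singletons) or the strict bounds $Q'_{\alpha,\beta}<Q_{i,j}$ and $Q'_{\beta,\alpha}<Q_{j,i}$. In either case $Q'_{\alpha,\beta}Q'_{\beta,\alpha}<1$, hence $(d')^{\sfrac12}(\alpha,\beta)>0$.

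To obtain the two quantitative bounds, I would simply apply $-\tfrac12\log$ to the $Q$-estimates. When $i$ and $j$ share an absolute segment, $Q'_{i,j}Q'_{j,i}=Q_{i,j}Q_{j,i}$ immediately yields $(d')^{\sfrac12}_{i,j}=d^{\sfrac12}_{i,j}$. Otherwise, multiplying the lemma's two-sided bound with its $i\leftrightarrow j$ mirror gives
\[
\frac{1}{2^{2c}\,|\alpha|\,|\beta|}\,Q_{i,j}Q_{j,i} \;\le\; Q'_{\alpha,\beta}Q'_{\beta,\alpha} \;<\; Q_{i,j}Q_{j,i},
\]
and taking $-\tfrac12\log$ produces exactly
\[
d^{\sfrac12}_{i,j} \;<\; (d')^{\sfrac12}_{\alpha,\beta} \;\le\; d^{\sfrac12}_{i,j} + c\log 2 + \tfrac12\log\bigl(|\alpha|\,|\beta|\bigr),
\]
matching the claim.

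The argument is essentially bookkeeping once the preceding $Q$-bound lemma is in hand; the only point requiring mild care is that the absolute-segment case forces both $\alpha$ and $\beta$ to be singletons (degenerate nodes lie in no segment with respect to their own equivalence class), so the two cases in the statement genuinely exhaust the possibilities for $\alpha\neq\beta$. All the real technical content has already been absorbed into the segment decomposition of \cref{structure} and the $Q$-bound lemma preceding the theorem, so the main obstacle is less a difficulty of proof than one of ensuring the constants $c$, $|\alpha|$, $|\beta|$ transfer cleanly through the $\log$.
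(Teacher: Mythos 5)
Your proposal is correct and mirrors the paper's own (implicit) argument: the paper derives \cref{t:12bounds} directly from the preceding $Q$-bound lemma with no further proof given, and your mechanical translation via $d^{\sfrac12}_{i,j} = -\tfrac12\log(Q_{i,j}Q_{j,i})$ — multiplying the lemma's bound by its $i\leftrightarrow j$ mirror and taking logs — is exactly what is being invoked. Your added observation that the absolute-segment case forces $\alpha$ and $\beta$ to be singletons (so $c$ and the $|\alpha|,|\beta|$ factors vanish there) and your explicit check that $Q'_{\alpha,\beta}Q'_{\beta,\alpha}<1$ for distinct classes, which upgrades $(d')^{\sfrac12}$ from pseudo-metric to metric, are both correct and usefully fill in what the paper leaves unstated.
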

Thus, $\iota$ pushes apart the different absolute segments. All other distances are unaffected. 
\begin{rem}
  $\iota$ is analogous to a rigid motion on each absolute segment, in that none of the in-absolute-segment distances are distorted.
\end{rem}


\section{Computational methods}\label{s:CompMeth}

To compute the normalized hitting probabilities matrix and metric structure on a Markov chain (or network) consisting of $n$ nodes/states with probability transition matrix $P$, we require only the computation of the invariant measure and the $Q$ matrix. The invariant measure can be computed using iterative eigenvector methods, which need $O(m)$ operations per iteration for $m$ edges.

We briefly recall the work in~\cite[Theorem 5]{Thiede_2015}, that shows the $Q$ matrix can be computed in $O(n^3)$ time.  The key idea from~\cite[Lemma 5]{Thiede_2015} is that one can compute 
\begin{equation} \label{eq:Q}
  Q_{i,j} (P) = \frac{e_i^T {(I - P_j)}^{-1} P_j e_j}{e_i^T {(I - P_j)}^{-1} e_i} = \frac{ {M(j)}^{-1}_{i,j}}{ {M(j)}^{-1}_{i,i}}\,,
\end{equation}
where $e_j\in \mathbb R^n$ is the vector with a $1$ in the $j$th entry and zeros elsewhere, $P_j = (I-e_j e_j^T) P \in \mathbb R^{n\times n}$, and the invertible matrix
$M(j) = I - P + e_j e_j^T P\in \mathbb R^{n\times n}$. See Theorem $5$ of~\cite{Thiede_2015} for full details, but this identity follows from realizing that as defined $M(j)$ is invertible with inverse
\[
M(j)^{-1} = \begin{pmatrix}
(I-P_j)^{-1} & (I-P_j)^{-1} P_j e_j \\
0 & 1
\end{pmatrix}
\]
given in block form on the $e_j^\perp$, $e_j$ basis.

If we then compute $M(1)^{-1}$ on the way to obtaining the first column $Q_{i,1} = {M(1)}^{-1}_{i1}/{M(1)}^{-1}_{ii}$,
then $M(j)$ is a rank-$2$ perturbation of $M(1)$ and we can apply the Sherman-Morrison-Woodbury identity to compute $M(j)^{-1}$. Since we only access $2n-2$ elements of $M(j)^{-1}$, the full $O(n^2)$-time Sherman-Morrison-Woodbury update is not needed, and we can get the $j$th column $Q_{i,j}$ in $O(n)$ computations from ${M(1)}^{-1}$. A \textsc{MATLAB} implementation of this procedure, along with code for all of the numerical experiments described in the paper, is available at~\url{https://github.com/zboyd2/hitting_probabilities_metric}.

The matrix $Q$ encodes the hitting probabilities of a random walk on the nodes of a graph and the order of the method we present here is very well documented in \cite{Thiede_2015}.  However, there are several results that consider the computational complexity of the related problem of commute times, see for instance the works \cite{li2010random,boley2011commute}.  The computational cost of computing hitting probabilities through inversion of the Laplacian has been explored further in \cite{golnari2019markov,cohen2016faster,cohen2018solving}, resulting in some cases in which the method may be improved to better than $O(n^3)$. As we are mostly interested in the construction of the metric here, we will not further explore the question of optimal order of the computation.

\section{Examples}\label{s:Examples}%
We consider examples of Markov Chains and directed graphs to illustrate the proposed metric. We start with simple graphs for which the calculations can be performed exactly. We then numerically explore a variety of synthetic graphs and a real-world example defined from New York City taxi cab data.

\subsection{Exact formulations}%
\label{sec:glued}

Here we consider some simple graphs on which the invariant measure and hitting probabilities can be computed exactly to help us understand $\Ahtb$ and $d^{\beta}$.

\begin{enumerate}
    \item \emph{Directed cycles:} Consider a directed cycle on $n$ nodes. Then $\phi_i = \sfrac{1}{n}$ for all $i$, and $Q_{i,j} = 1$ for all $i\ne j$. Therefore, $\Ahtb$ is a weighted clique, and $d^{\beta}$ has all points equidistant. For $\beta=\sfrac12$, the weights equal to $1$ and all nodes are identified with each other in the metric topology.

\item \emph{Complete graphs:} Consider a complete graph on $n>2$ nodes. Then $\phi_i = \sfrac{1}{n}$ for all $i$, and $Q_{i,j} = \const < 1$ for all $i \ne j$. Therefore, $\Ahtb$ is a weighted clique. Unlike the directed cycle case, the weights in the clique are $<1$ for all  $\beta\geq\sfrac12$.

\item \emph{Glued cycles:}
Consider graphs of the type depicted in~\cref{fig:glue}, namely graphs composed of $n_b$ ``backbone nodes'' forming a directed chain, which then branches into $C$ chains of length $n_c$, each of which finally connects back to the beginning of the backbone chain. Intuitively, a random walker on this graph transitions between $C+1$ groups of nodes, namely, each of the $C$ branches and the backbone. As illustrated in~\cref{tab:glue}, our metric captures this intuition by placing each node very close to the others on its chain. This is in contrast to commute-time-based metrics, where the length of the chain must be taken into account. (See~\cref{fig:effRes}.)
In~\cref{sec:num_ex}, we consider some numerical results based on this example.
\end{enumerate}

\begin{table}
    \centering
    \begin{tabular}{llcccc}
        \toprule
    $i$       &   $j$           &   $Q$         &   $\frac{\Ahtb[\beta]}{(n_b + n_c)^{1-2\beta}}$  &  $\Ahtb[\sfrac12]$  &   $d^\frac12$ \\
        \midrule
    branch    & same branch      & $1$          &   $\sfrac1{C^{2\beta-1}}$                        &  $1$                &   $0$                \\\addlinespace[2pt]
    branch    & different branch & $\sfrac12$   &   $\sfrac1{2 C^{2\beta-1}}$                      &  $\sfrac12$         &   $\log 2$         \\\addlinespace[2pt]
    backbone  & branch           & $\sfrac1C$   &   $\sfrac1{C^{\beta}}$                           &  $C^{-\sfrac12}$    &   $\sfrac12 \log C$  \\\addlinespace[2pt]
    branch    & backbone         & $1$          &   $\sfrac1{C^{\beta}}$                           &  $C^{-\sfrac12}$    &   $\sfrac12 \log C$  \\\addlinespace[2pt]
    backbone  & backbone         & $1$          &   $1$                                            &  $1$                &   $0$                \\\addlinespace[2pt]
        \bottomrule
    \end{tabular}
    \caption{Values of $Q$, $\Ahtb$, and $d$ evaluated at distinct nodes $i$ and $j$ for the glued cycles example from~\cref{sec:glued}. We include extra columns for the case $\beta=\sfrac12$, which is particularly interpretable. Observe that neither $\Ahtb$ nor $d^{\beta}$ depends on $n_b$ or $n_t$ (except up to scaling), which is a manifestation of their blindness to walk length. Also, the nodes that are closest together are those which lie on common chains. Note that we scaled $\Ahtb$ for visual clarity. The invariant measure is easily verified to be $\left(n_b + n_c\right)^{-1}$ on the backbone and $\left(C (n_b + n_c)\right)^{-1}$ elsewhere.}%
    \label{tab:glue}
\end{table}

\subsection{Synthetic numerical examples}%
\label{sec:num_ex}
We consider four examples. The first two demonstrate that the spectrum of $\Ahtb$ (for $\beta = \sfrac12$ or $\beta=1$) identifies cyclic and clique-like sets in a useful manner. We compare to two alternative symmetrizations and another metric. The second example additionally shows the scalability of our approach. In the third example, we explore when it is advantageous to use $d$ for visualization and clustering purposes, using a directed planted partition model for ground truth comparisons. In dense, difficult-to-detect regimes, our method is more accurate than clustering using the input adjacency matrix directly. Finally, in the fourth example, we compare $d^{\sfrac12}$, $d^{1}$, and spatial distance for geometric graphs, finding that our distance captures comparable information to the spatial distance, with the similarity being especially tight when $\beta=\sfrac12$.

\subsubsection{Glued-cycles networks}%
\label{sec:gluedcycles}

For the two-glued-cycles networks illustrated in~\cref{fig:glue}, we construct a probability transition matrix $P$ by taking a uniform edge weight for all connected vertices and performing a row normalization.  We then compute the Fiedler eigenvector corresponding to the second smallest eigenvalue of the graph Laplacian (sometimes called the Fiedler vector) for different symmetrized adjacency matrices.  For the adjacency matrices $A$ constructed below, we calculate the graph Laplacian $L=D-A$, with $D$ the diagonal matrix of node degrees (row or column sums of $A$). The examples here are two directly glued cycles, as well as two glued cycles with a bidirectional edge between the cycles.  In the first case, the results are all very similar regardless of the symmetrization, but for the second case the results differ significantly.  In each case, we group the nodes based on whether the corresponding vector element is positive, negative, or zero. 

\begin{figure}
\centering
\begin{tikzpicture}[scale=0.75] 
\begin{scope}[every node/.style={circle,thick,draw}]
    \node (A) at (0,0) {};
    \node (B) at (0,1) {};
    \node (C) at (0,2) {};
    \node (D) at (1,2.5) {};
    \node (E) at (1.5,1.5) {};
    \node (F) at (1.5,.5) {} ;
    \node (G) at (1,-.5) {} ;
    \node (H) at (-1,2.5) {} ;
    \node (I) at (-1.5,1.5) {} ;
    \node (J) at (-1.5,.5) {} ;
    \node (K) at (-1,-.5) {} ;
\end{scope}

\begin{scope}[>={Stealth[black]},
              every node/.style={fill=white,circle},
              every edge/.style={draw=red,very thick}]
    \path [->] (A) edge (B);
    \path [->] (B) edge (C);
    \path [->] (C) edge (D);
    \path [->] (D) edge (E);
    \path [->] (E) edge (F);
    \path [->] (F) edge (G);
    \path [->] (G) edge (A);
    \path [->] (C) edge (H);
    \path [->] (H) edge (I); 
    \path [->] (I) edge (J); 
    \path [->] (J) edge (K); 
    \path [->] (K) edge (A); 
\end{scope}
\end{tikzpicture}  \ \ \
\begin{tikzpicture}[scale=0.75] 
\begin{scope}[every node/.style={circle,thick,draw}]
    \node (A) at (0,0) {};
    \node (B) at (0,1) {};
    \node (C) at (0,2) {};
    \node (D) at (1,2.5) {};
    \node (E) at (1.5,1.5) {};
    \node (F) at (1.5,.5) {} ;
    \node (G) at (1,-.5) {} ;
    \node (H) at (-1,2.5) {} ;
    \node (I) at (-1.5,1.5) {} ;
    \node (J) at (-1.5,.5) {} ;
    \node (K) at (-1,-.5) {} ;
\end{scope}

\begin{scope}[>={Stealth[black]},
              every node/.style={fill=white,circle},
              every edge/.style={draw=red,very thick}]
    \path [<->] (F) edge (J); 
    \path [->] (A) edge (B);
    \path [->] (B) edge (C);
    \path [->] (C) edge (D);
    \path [->] (D) edge (E);
    \path [->] (E) edge (F);
    \path [->] (F) edge (G);
    \path [->] (G) edge (A);
    \path [->] (C) edge (H);
    \path [->] (H) edge (I); 
    \path [->] (I) edge (J); 
    \path [->] (J) edge (K); 
    \path [->] (K) edge (A); 
\end{scope}
\end{tikzpicture} \ \ \
\begin{tikzpicture}[scale=0.75] 
\begin{scope}[every node/.style={circle,thick,draw}]
    \node[fill=magenta] (A) at (0,0) {};
    \node[fill=magenta] (B) at (0,1) {};
    \node[fill=magenta] (C) at (0,2) {};
    \node[fill=blue] (D) at (1,2.5) {};
    \node[fill=blue] (E) at (1.5,1.5) {};
    \node[fill=blue] (F) at (1.5,.5) {} ;
    \node[fill=blue] (G) at (1,-.5) {} ;
    \node[fill=green] (H) at (-1,2.5) {} ;
    \node[fill=green] (I) at (-1.5,1.5) {} ;
    \node[fill=green] (J) at (-1.5,.5) {} ;
    \node[fill=green] (K) at (-1,-.5) {} ;
\end{scope}

\begin{scope}[>={Stealth[black]},
              every node/.style={fill=white,circle},
              every edge/.style={draw=red,very thick}]
    \path [->] (A) edge (B);
    \path [->] (B) edge (C);
    \path [->] (C) edge (D);
    \path [->] (D) edge (E);
    \path [->] (E) edge (F);
    \path [->] (F) edge (G);
    \path [->] (G) edge (A);
    \path [->] (C) edge (H);
    \path [->] (H) edge (I); 
    \path [->] (I) edge (J); 
    \path [->] (J) edge (K); 
    \path [->] (K) edge (A); 
\end{scope}
\end{tikzpicture} \\ 
\vspace{.2cm}
\begin{tabular}{cccc}
\begin{tikzpicture}[scale=0.75] 
\begin{scope}[every node/.style={circle,thick,draw}]
    \node[fill=blue] (A) at (0,0) {};
    \node[fill=blue] (B) at (0,1) {};
    \node[fill=green] (C) at (0,2) {};
    \node[fill=green] (D) at (1,2.5) {};
    \node[fill=green] (E) at (1.5,1.5) {};
    \node[fill=blue] (F) at (1.5,.5) {} ;
    \node[fill=blue] (G) at (1,-.5) {} ;
    \node[fill=green] (H) at (-1,2.5) {} ;
    \node[fill=green] (I) at (-1.5,1.5) {} ;
    \node[fill=blue] (J) at (-1.5,.5) {} ;
    \node[fill=blue] (K) at (-1,-.5) {} ;
\end{scope}
\begin{scope}[>={Stealth[black]},
              every node/.style={fill=white,circle},
              every edge/.style={draw=red,very thick}]
    \path [<->] (F) edge (J); 
    \path [->] (A) edge (B);
    \path [->] (B) edge (C);
    \path [->] (C) edge (D);
    \path [->] (D) edge (E);
    \path [->] (E) edge (F);
    \path [->] (F) edge (G);
    \path [->] (G) edge (A);
    \path [->] (C) edge (H);
    \path [->] (H) edge (I); 
    \path [->] (I) edge (J); 
    \path [->] (J) edge (K); 
    \path [->] (K) edge (A); 
\end{scope}
\end{tikzpicture}
&
\begin{tikzpicture}[scale=0.75] 
\begin{scope}[every node/.style={circle,thick,draw}]
    \node[fill=green] (K) at (-1,-.5) {} ;
    \node[fill=green] (A) at (0,0) {};
    \node[fill=green] (B) at (0,1) {};
    \node[fill=green] (C) at (0,2) {};
    \node[fill=blue] (D) at (1,2.5) {};
    \node[fill=blue] (E) at (1.5,1.5) {};
    \node[fill=blue] (F) at (1.5,.5) {} ;
    \node[fill=green] (G) at (1,-.5) {} ;
    \node[fill=blue] (H) at (-1,2.5) {} ;
    \node[fill=blue] (I) at (-1.5,1.5) {} ;
    \node[fill=blue] (J) at (-1.5,.5) {} ;
\end{scope}
\begin{scope}[>={Stealth[black]},
              every node/.style={fill=white,circle},
              every edge/.style={draw=red,very thick}]
    \path [<->] (F) edge (J); 
    \path [->] (A) edge (B);
    \path [->] (B) edge (C);
    \path [->] (C) edge (D);
    \path [->] (D) edge (E);
    \path [->] (E) edge (F);
    \path [->] (F) edge (G);
    \path [->] (G) edge (A);
    \path [->] (C) edge (H);
    \path [->] (H) edge (I); 
    \path [->] (I) edge (J); 
    \path [->] (J) edge (K); 
    \path [->] (K) edge (A); 
\end{scope}
\end{tikzpicture}
&
\begin{tikzpicture}[scale=0.75] 
\begin{scope}[every node/.style={circle,thick,draw}]
    \node[fill=magenta] (A) at (0,0) {};
    \node[fill=magenta] (B) at (0,1) {};
    \node[fill=magenta] (C) at (0,2) {};
    \node[fill=blue] (D) at (1,2.5) {};
    \node[fill=blue] (E) at (1.5,1.5) {};
    \node[fill=blue] (F) at (1.5,.5) {} ;
    \node[fill=blue] (G) at (1,-.5) {} ;
    \node[fill=green] (H) at (-1,2.5) {} ;
    \node[fill=green] (I) at (-1.5,1.5) {} ;
    \node[fill=green] (J) at (-1.5,.5) {} ;
    \node[fill=green] (K) at (-1,-.5) {} ;
\end{scope}
\begin{scope}[>={Stealth[black]},
              every node/.style={fill=white,circle},
              every edge/.style={draw=red,very thick}]
    \path [<->] (F) edge (J); 
    \path [->] (A) edge (B);
    \path [->] (B) edge (C);
    \path [->] (C) edge (D);
    \path [->] (D) edge (E);
    \path [->] (E) edge (F);
    \path [->] (F) edge (G);
    \path [->] (G) edge (A);
    \path [->] (C) edge (H);
    \path [->] (H) edge (I); 
    \path [->] (I) edge (J); 
    \path [->] (J) edge (K); 
    \path [->] (K) edge (A); 
\end{scope}
\end{tikzpicture}
&
\begin{tikzpicture}[scale=0.75] 
\begin{scope}[every node/.style={circle,thick,draw}]
    \node[fill=magenta] (A) at (0,0) {};
    \node[fill=magenta] (B) at (0,1) {};
    \node[fill=magenta] (C) at (0,2) {};
    \node[fill=blue] (D) at (1,2.5) {};
    \node[fill=blue] (E) at (1.5,1.5) {};
    \node[fill=blue] (F) at (1.5,.5) {} ;
    \node[fill=blue] (G) at (1,-.5) {} ;
    \node[fill=green] (H) at (-1,2.5) {} ;
    \node[fill=green] (I) at (-1.5,1.5) {} ;
    \node[fill=green] (J) at (-1.5,.5) {} ;
    \node[fill=green] (K) at (-1,-.5) {} ;
\end{scope}

\begin{scope}[>={Stealth[black]},
              every node/.style={fill=white,circle},
              every edge/.style={draw=red,very thick}]
    \path [<->] (F) edge (J); 
    \path [->] (A) edge (B);
    \path [->] (B) edge (C);
    \path [->] (C) edge (D);
    \path [->] (D) edge (E);
    \path [->] (E) edge (F);
    \path [->] (F) edge (G);
    \path [->] (G) edge (A);
    \path [->] (C) edge (H);
    \path [->] (H) edge (I); 
    \path [->] (I) edge (J); 
    \path [->] (J) edge (K); 
    \path [->] (K) edge (A); 
\end{scope}
\end{tikzpicture} \\
$A = \max(P,P^T)$    &   Chung's $L$~\cite{Chung_2005}  &   $A =\Ahtb[1]$  &   $A=\Ahtb[\sfrac12]$
\end{tabular}
  \caption{(Top Left) Two-glued-cycles example from~\cref{sec:glued} with $n_b=3$, $n_c=4$, and $C=2$. The ``backbone'' nodes run along the center, and the two partial cycles split off from and then return to it.  (Top Middle) Similar two-glued-cycles network with a bidirectional edge.  (Top Right) Sign of the Fiedler vector of the Laplacian for several different symmetrizations. (Bottom) The sign of the Fiedler vector of the Laplacian for several different symmetrizations.
  The sign of the Fiedler vectors is encoded as ($-$, green), ($0$, magenta) and ($+$, blue).}
  \label{fig:glue}
\end{figure}
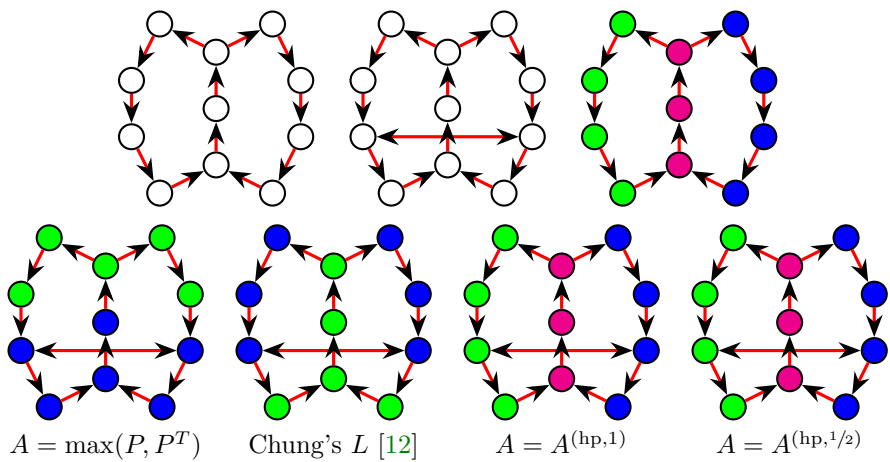
For the two glued cycles without the bidirectional edge, the naive symmetrizations of the directed adjacency matrix, either  $A = (P+P^T)/2$ or $A=\max\left(P,P^T\right)$, 
have a Fiedler vector that is $0$ on the spine and splits each cycle into signed components, see the top right plot in~\cref{fig:glue}.  However, in the bottom left component~\cref{fig:glue}, for the two-glued-cycles network with the bidirectional edge, the naive symmetrization splits the network horizontally, which is reasonable, since the resulting graph cut is small, although this (by construction) does not reflect the coherent, directed structure of the original graph. 

One way to account for directed structure in a way that minimizes equilibrium flux across the cut was suggested by Fan Chung~\cite{Chung_2005} (cited in \cref{s:RelWork}), defining the Laplacian by $
  L = I - \frac{1}{2} \left[ \Phi^{\sfrac 1 2} P \Phi^{-\sfrac 1 2} + \Phi^{-\sfrac 1 2} P^T \Phi^{\sfrac 1 2} \right], $
where $\Phi = \textrm{diag}(\phi) \in \mathbb R^{n \times n}$. 
Chung uses $L$ to establish a Cheeger-type inequality for digraphs, which is used to study the rate of convergence for Markov chains. Using Chung's Laplacian again gives a comparable outcome for the two glued cycles example (\cref{fig:glue}), but in the example with the bidirectional edge, this symmetrization places most of the non-backbone nodes in one class and all backbone nodes in the other (second plot in~\cref{fig:glue}).

The normalized hitting probabilities matrices $\Ahtb[1]$ and $\Ahtb[\sfrac12]$ each distinguish between the two branches, with the backbone set equal to zero in both the cases of the glued cycles and the glued cycles with a bidirectional edge as seen in~\cref{fig:glue}.
Thus, all three approaches uncover different structure in the two-glued-cycle graph with the bidirectional edge, with the naive symmetrization yielding small undirected cuts, Chung's approach yielding (perhaps) two different dynamical states, and $\Ahtb[\beta]$ showing all three chains in a natural way for both $\beta = \sfrac12, 1$.

Finally, we compare the total effective resistance metric of~\cite{Young_2016b} to our metric on the example of the two-glued-cycles network (with no bidirectional edge). As one might expect given the relation ship between effective resistance and commute times in the undirected case, the total effective resistance of~\cite{Young_2016b} is sensitive to cycle length.  
\Cref{fig:effRes} demonstrates that the commute time approach views the distances on each cycle quite differently and that the relative distances from the total effective resistance metric are more difficult to interpret in the second loop.

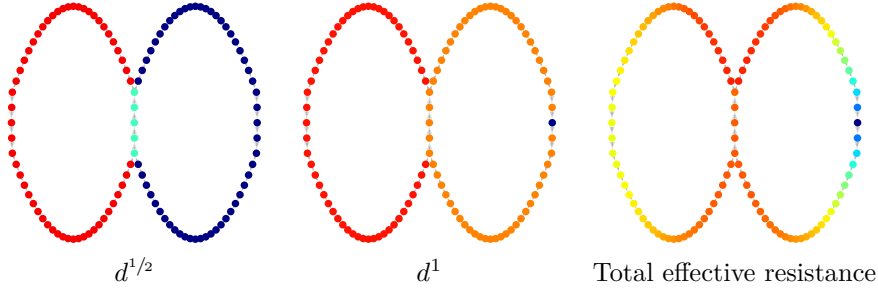
\begin{figure}
  \centering
  \begin{tabular}{ccc}
    \begin{tikzpicture}[xscale=.35,yscale=.4]
      \Vertices[RGB=true,size=.1]{nodesd12.dat}
      \Edges[Direct=True,lw=.01,opacity=.2]{edges.dat}
    \end{tikzpicture}
    &
    \begin{tikzpicture}[xscale=.35,yscale=.4]
      \Vertices[RGB=true,size=.1]{nodesd1.dat}
      \Edges[Direct=True,lw=.01,opacity=.2]{edges.dat}
    \end{tikzpicture}
    &
    \begin{tikzpicture}[xscale=.35,yscale=.4]
      \Vertices[RGB=true,size=.1]{nodesR.dat}
      \Edges[Direct=True,lw=.01,opacity=.2]{edges.dat}
    \end{tikzpicture}
    \\
    $d^{\sfrac12}$ &
    $d^1$ &
    Total effective resistance
  \end{tabular}
  \caption{Two glued cycles with $n_c=55$ and $n_b=5$, with nodes colored by distance from a node on the far right. Blue denotes small distances. The metric $d^{\sfrac12}$ has three levels of distance corresponding to nodes on the same branch, backbone, and opposite branch, respectively. The metric $d^1$ is similar, except nodes on the same branch are not distinguished from backbone nodes. Finally, the total effective resistance metric from~\cite{Young_2016b} gives a smoother notion of distance on the right branch and backbone, but on the left branch,  proceeding counterclockwise, one finds the distance decreasing and then increasing again, which is somewhat difficult to interpret. This example shows how different resistance/commute time are from hitting-probability distance.}%
  \label{fig:effRes}
\end{figure}


\subsubsection{Cycle adjoined to directed Erd\H{o}s--R\'enyi Graph}
Consider the following construction, illustrated in~\cref{fig:er}.
Let $n=n_{\mathrm{er}} + n_{\mathrm{cycle}}$, and let the
\begin{wrapfigure}[15]{r}{.35\textwidth}
  \centering
  \includegraphics[width=.35\textwidth]{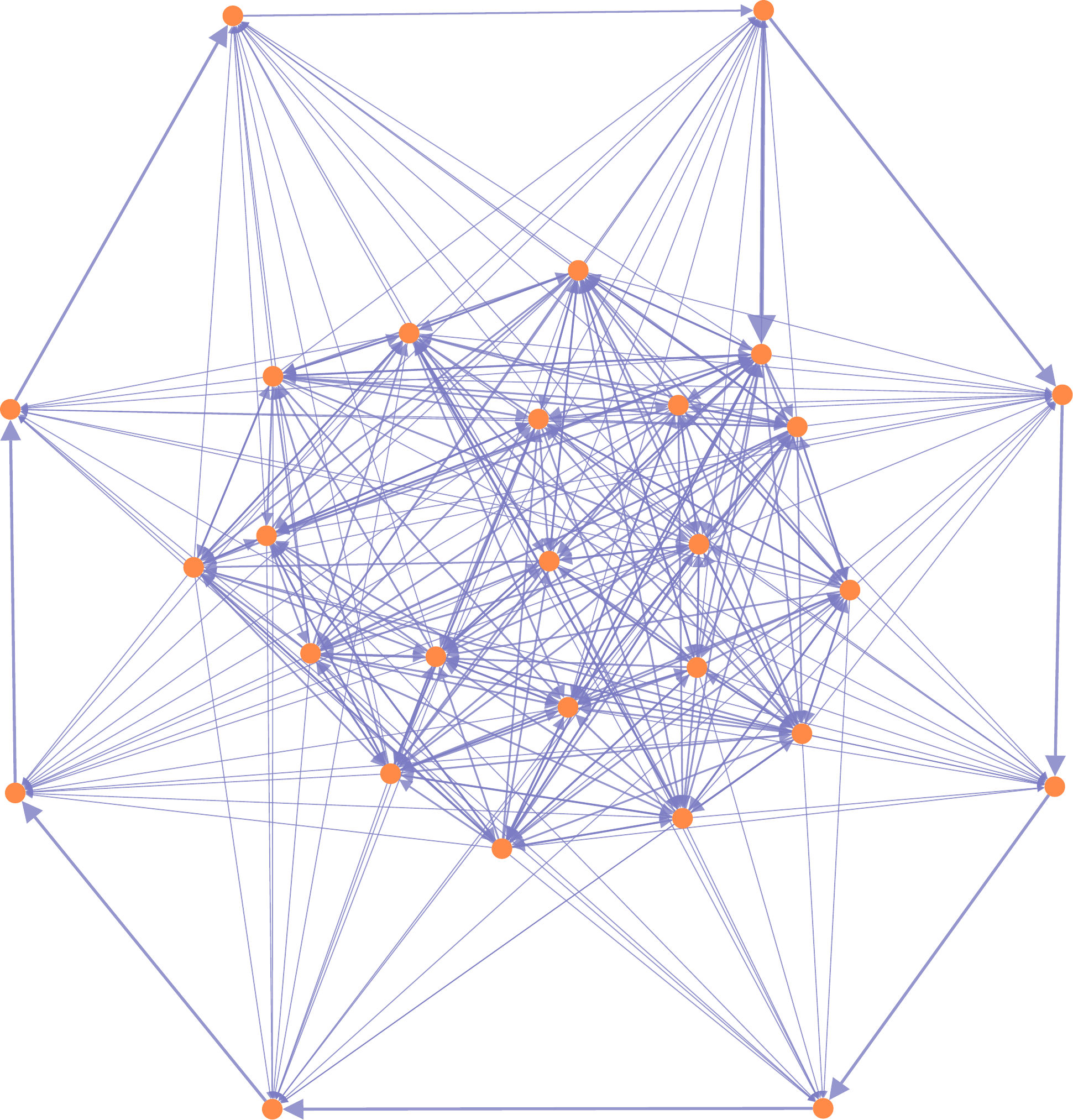}
  \vspace*{-0.2in}
  \caption{Erd\H{o}s-R\'enyi plus cycle example.}
  \label{fig:er}
\end{wrapfigure}
first $n_{\mathrm{er}}$ nodes form an unweighted, directed ER graph with connection probability $p$. 
The remaining $n_{\mathrm{cycle}}$ nodes form an unweighted, directed cycle. 
An adjacency matrix for the ER graph and cycle are connected by adding $2\, \mathrm{round}(n p)-1$ edges of weight $w$ to each cycle node from randomly selected nodes in the ER graph.\footnote{These edges are drawn with replacement with multi-edges merged to a single edge of weight $w$. Results were similar when we added the weights instead.} Finally, a single, bidirectional edge of weight $1$ is added from one cycle node to one ER node. 
Normalizing the rows to form a probability transition matrix, a random walker on this graph would transition between the ER and cycle subgraphs, 
where the cycle subgraph is difficult to escape quickly because of the single exit. 
For the particular choice of $n_{\mathrm{er}}=20$, $n_{\mathrm{cycle}}=8$, $p=.5$, and $w=3$, we find that the Fiedler vector of (the Laplacian associated with) $\Ahtb[\sfrac12]$ is positive on the cycle nodes and negative elsewhere. In contrast, the Fiedler vector of the naive symmetrization $A=(P+P^T)/2$ or Chung's $L$~\cite{Chung_2005} does not separate the cycle and ER nodes.
Scaling up to $n=n_{\mathrm{er}}+n_{\mathrm{cycle}} = 7,200 + 2,800 = 10,000$ nodes keeping the other parameters the same ($\approx 38.7$ million edges) gives similar eigenvector results. 
The computation takes 31 seconds on a Lenovo ThinkStation P410 desktop with Xeon E5--1620V4 3.5 GHz CPU and 16 GB RAM using {\sc MATLAB} R2019a Update 4 (9.6.0.1150989) 64-bit (glnxa64): 18 seconds to compute $Q$, 6 seconds to compute $\phi$, 2 seconds to form $\Ahtb[\sfrac{1}{2}]$, and 5 seconds to compute the Fiedler vector.

\subsubsection{Cluster detection and visualization for digraphs}\label{sec:kmeans}

We next use $d$ for clustering and dimension reduction. 
We consider directed graphs generated by a planted partition model with nodes grouped into three ground truth communities and form a uniformly weighted adjacency matrix by connecting an edge from $i$ to $j$ with probability $p_{\mathrm{in}}$ if $i$ and $j$ are in the same community and $p_{\mathrm{out}}$ ($<p_{\mathrm{in}}$) otherwise. A probability transition matrix can then be formed using row normalization.
We then attempt to recover the ground truth node assignments. 
\begin{wrapfigure}[15]{r}{.32\textwidth}
  \centering
  \includegraphics[width=.3\textwidth]{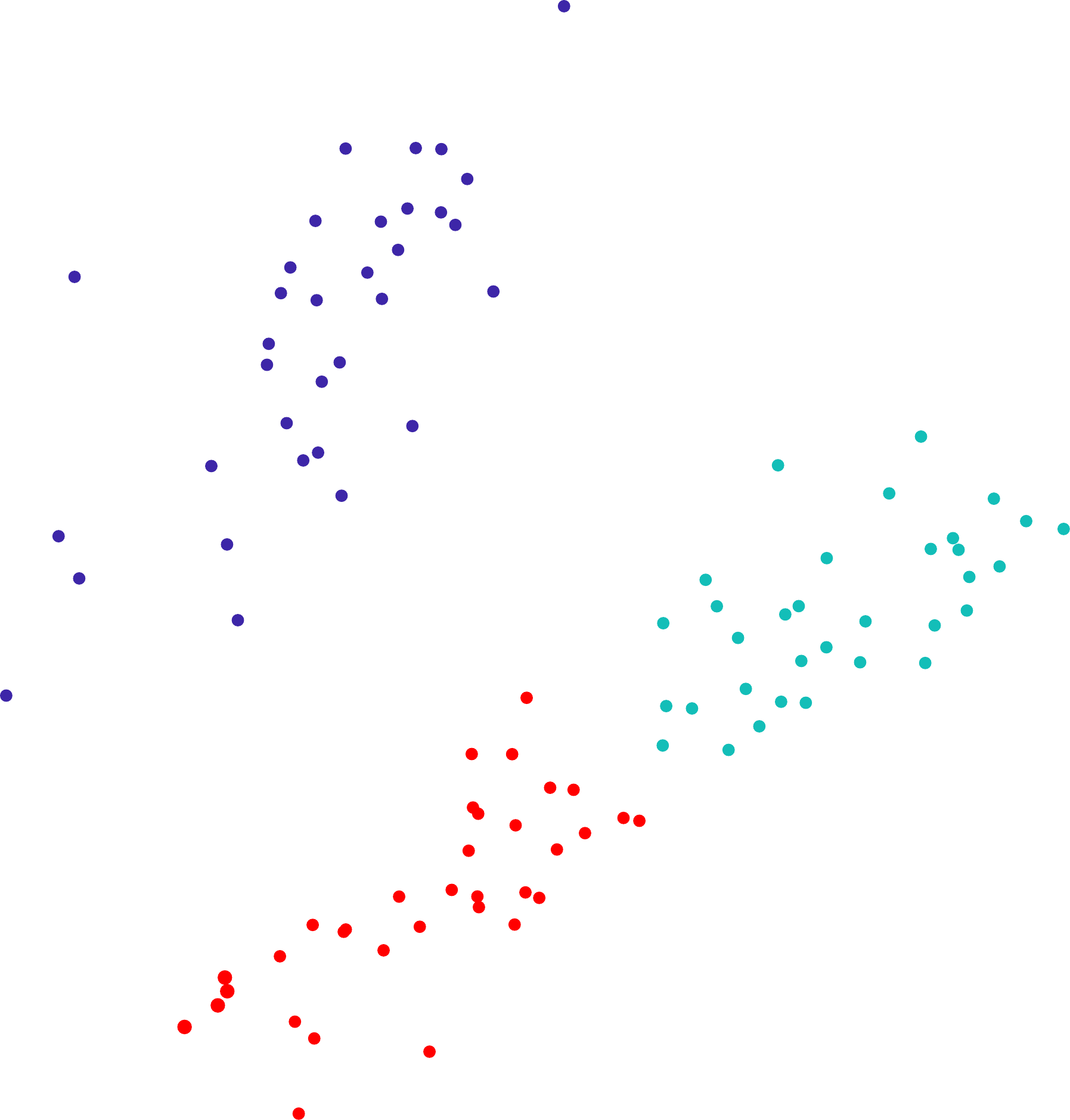}
  \caption{PCA embedding of $d^{\sfrac12}$, colored by ground truth community.}
  \label{fig:pca}
\end{wrapfigure}
The difficulty of this problem is generally understood in terms of $\Delta = p_{\mathrm{in}} - p_{\mathrm{out}}$ and $\rho = \frac{p_{\mathrm{in}} + 2 p_{\mathrm{out}}}{3}$. Small values of $\Delta$ correspond to more difficult clustering problems that may be solved less accurately (relative to the ground truth). 
In this example we attempt to cluster the nodes into $k=3$ clusters using several approaches: (1) principal component analysis\footnotemark (PCA)~\cite{pca} on the adjacency matrix, $A$, followed by $k$-means clustering on the first $k-1$ PCA vectors; (2) PCA on $d^{\sfrac12}$ followed by $k$-means; and (3) $k$-medoids on $d^{\sfrac12}$. 
(The $k$-medoids algorithm is similar in spirit to the $k$-means unsupervised clustering algorithm but applies in arbitrary metric spaces, see for instance \cite{kaufmann1987clustering,park2009simple}.) Results are shown in~\cref{fig:kmedoids,fig:regions}.
\footnotetext{Specifically, we used the PCA routine from MATLAB R2019a Update 4 (9.6.0.1150989) 64-bit (glnxa64). As expected, this gives different results in general when applied to a matrix versus its transpose. In this case, the matrix is stochastically equivalent with its transpose, and in the NY taxi example below, the PCA-based plots are similar regardless of whether the transpose is used.}
\begin{figure}[!ht]
  \centering
  \footnotesize
    \def\svgwidth{\linewidth}
    \input{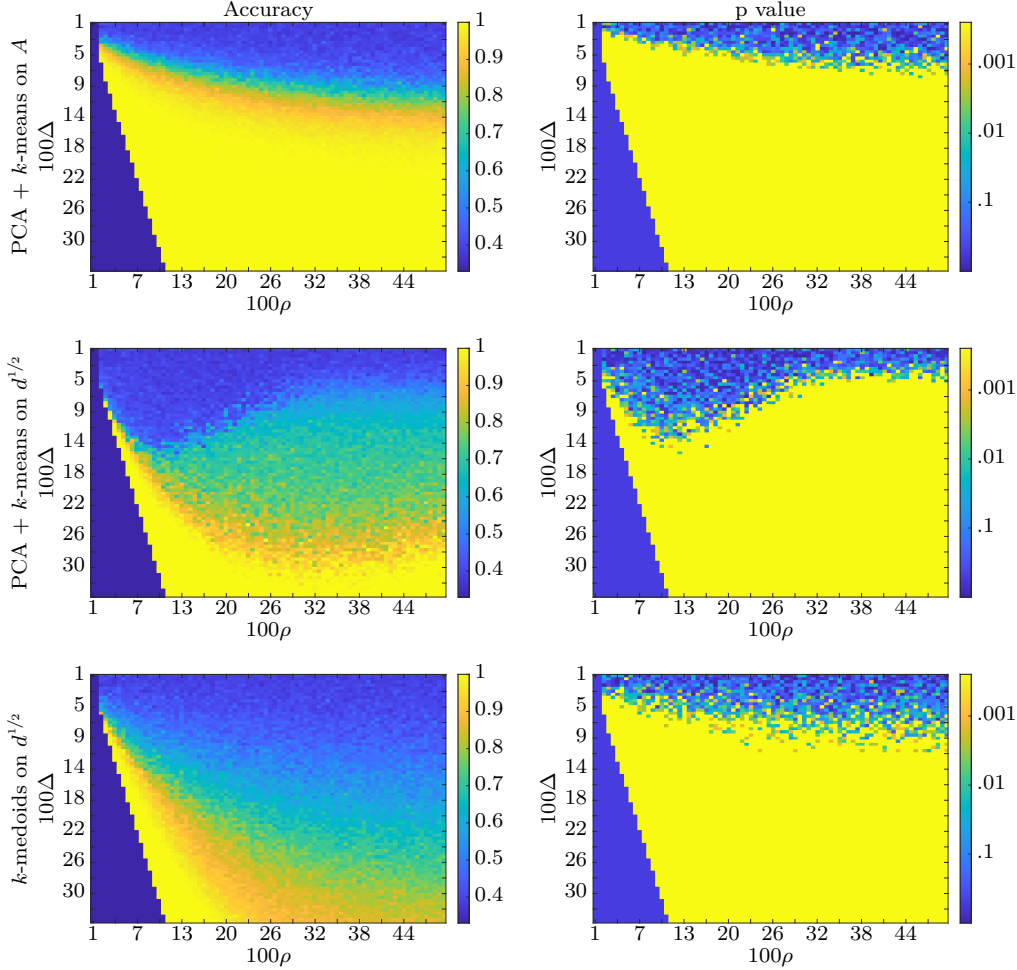}
  \caption{Results of (top row) PCA on $A$ followed by $k$-means, (middle) PCA on $d^{\sfrac12}$ followed by $k$-means, and (bottom) $k$-medoids on $d^{\sfrac12}$ on 300-node graphs generated using the directed planted partition model with three clusters, as described in~\cref{sec:kmeans}. We varied the mean edge density, $\rho$, and cluster quality, $\Delta=p_{\mathrm{in}}-p_{\mathrm{out}}$. Since results depend on the random initialization, we report best of 5 runs for each entry. If any generated graph was not strongly connected, we did not try to cluster it. The left column is the accuracy (purity) of the recovered partition, and the right value is the empirical $p$ value of the accuracy relative to 4,000 random partitions obtained by drawing each community label uniformly at random. Notably, method (2) has the best performance for dense, weakly-clustered graphs. [Note that the triangular blue region on the lower left of each plot represents a $(\rho,\Delta)$ parameter combination that cannot exist.]}%
  \label{fig:kmedoids}
\end{figure}
\begin{figure}
  \centering
  \footnotesize
    \def\svgwidth{\linewidth}
    \input{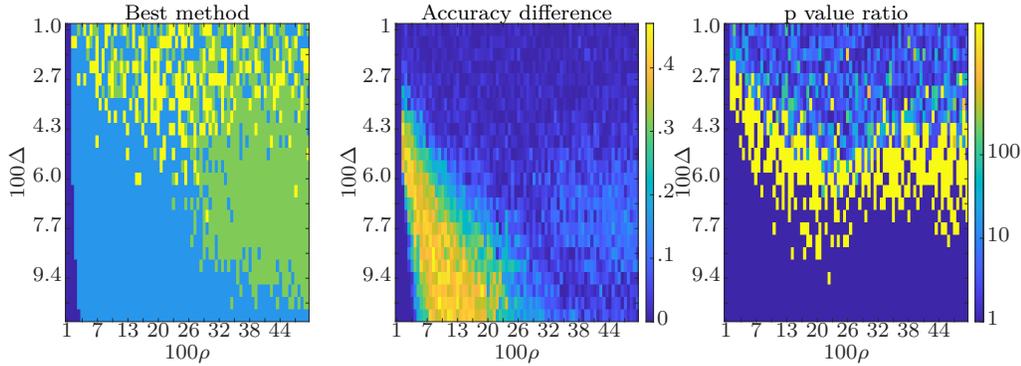}
  \caption{(Left) Regions where the methods from~\cref{fig:kmedoids} perform best. Here, light blue is method (1), green is method (2), and yellow is method (3). (Middle) Difference in accuracy between the best and second-best methods. (Right) Ratio of $p$ value of the best and second best methods. from~\cref{fig:kmedoids}. Combining these plots, we see that there is a significant parameter regime consisting of dense, difficult to detect structure, where using $d^{\sfrac12}$ instead of $A$ enhances the spectral detection of structure by 5--20\% for graphs where method (1) is recovering essentially no structure in $A$. Note that the y axis is different from~\cref{fig:kmedoids}.}%
  \label{fig:regions}
\end{figure}
We find that 
method (1) works best on sparse or well-separated clusters, method (2) works best with dense, difficult-to-detect clusters, and method (3) has no clear advantage. More specifically, using $d^{\sfrac12}$ in method (2) enhances our ability to get a better-than-chance clustering in dense networks.\footnote{We also tried using the shortest commute and generalized effective resistance metrics~\cite{Young_2016b,Young_2016a} as substitute for the hitting time metric in this example and found similar improvements over using the raw adjacency matrix. In particular, the shortest commute was the most effective metric for this task (although this metric is not robust, so the real-life performance may be different).} (We note that spectral methods in undirected graphs give asymptotically optimal almost-exact recovery but are not optimal for harder cases where only better-than-chance recoverability is possible~\cite{abbe}. This is consistent with~\cref{fig:kmedoids,fig:regions}.) 
Finally, we can also use PCA on $d^{\sfrac12}$ to visualize the directed network. The first and second principal components, generated using {\sc MATLAB}'s built-in routine, are plotted in~\cref{fig:pca}, clearly showing the separation into three clusters, which are in accordance with the three ground-truth communities.

\subsubsection{Distances on geometric graphs} 

Given known convergence properties of various graph models to continuum problems (e.g. \cite{trillos2016consistency,trillos2018variational,singer2012vector,singer2017spectral,Osting_2017}), we are motivated by the question of how our distance metric compares to a standard notion of distance when the network arises from a natural geometric setting.  For instance, as mentioned in the introduction, \cite{singer2012vector} proves that the notion of diffusion distance converges to that of geodesic distance as a point cloud samples a closed manifold at higher and higher densities.

In~\cref{fig:geodistance}, we consider distances computed using our metric structure in a family of geometric graphs constructed using Euclidean distances to determine edge weights.  The geometric graphs considered are 

\begin{enumerate}[(a)]
  \item A random point cloud on a flat torus ${[0,2 \pi]}^2$ with $36^2$ points,
  \item A random point cloud on a flat torus with a hole ${[0,2 \pi]}^2 \setminus B((\pi,\pi),\pi/2)$ with $36^2$ points (distances relative to a point in the bottom left of the torus),
  \item  An $H$ shaped domain $({[0,2 \pi]}^2 \cap \{|x_1-\pi|\geq \pi/2\}) \cup ({[0,2 \pi]}^2 \cap \{|x_2-\pi|\leq \pi/4\})$ with $36^2$ points (distances relative to a point in the bottom right of the $H$), 
  \item A random point cloud on the circle of length $2 \pi$ with $1000$ points, 
  \item A random point cloud on a sphere of radius $1$ in $\mathbb{R}^3$ with $1000$ points,
  \item  A square $10 \times 10$ lattice on the flat torus ${[0,2 \pi]}^2$.
\end{enumerate} 
For the regular lattice example, the edge weights are only carried on nearest neighbor vertices.  In all other cases, we consider the edge weights to be of the form $e^{-\gamma d_{\text{Euc}}{(x_i,x_j)}^2}$, where $d_{\text{Euc}}$ is just the Euclidean distance metric (determined with periodicity if the domain is periodic, \ie, we take shortest path distance in the flat torus).  We have chosen the scale factor $\gamma = 1$ uniformly throughout.

Once the geometric graph is constructed, we computed the pairwise Euclidean distances, as well as the pairwise distances $d^{\sfrac12}$ and $d^1$ for comparison.  To assist with interpretation and comparison, we have ordered the vertices in~\cref{fig:geodistance} from closest to farthest relative to the $d^{\sfrac12}$ metric and plotted for each distance function the rescaled distances $(d-d_{\min})/(d_{\max} - d_{\min})$ to normalize all of them to the same scale.

Throughout, we note that $d^{\sfrac12}$ is a reasonable fit to the measured Euclidean distances, while $d^1$ seems to do well only when the geometry is such that the invariant measure normalization (that is, the choice of $\beta$) does not matter as much. Note that the distance $d^{\sfrac12}$ and $d^1$ are identical on the square lattice, up to scaling. 
In this case, we are really studying the structure of the $Q$ hitting probability matrix.  Our results give some preliminary indication that in the consistency limit the $d^{\sfrac12}$ metric may converge to the Euclidean distance while the $d^1$ metric converges to something else entirely.  However, we leave this pursuit for future analytical studies.

\begin{figure}[t!]
  \centering
  \begin{subfigure}[t]{0.3\textwidth}
    \centering
    \includegraphics[width=\textwidth]{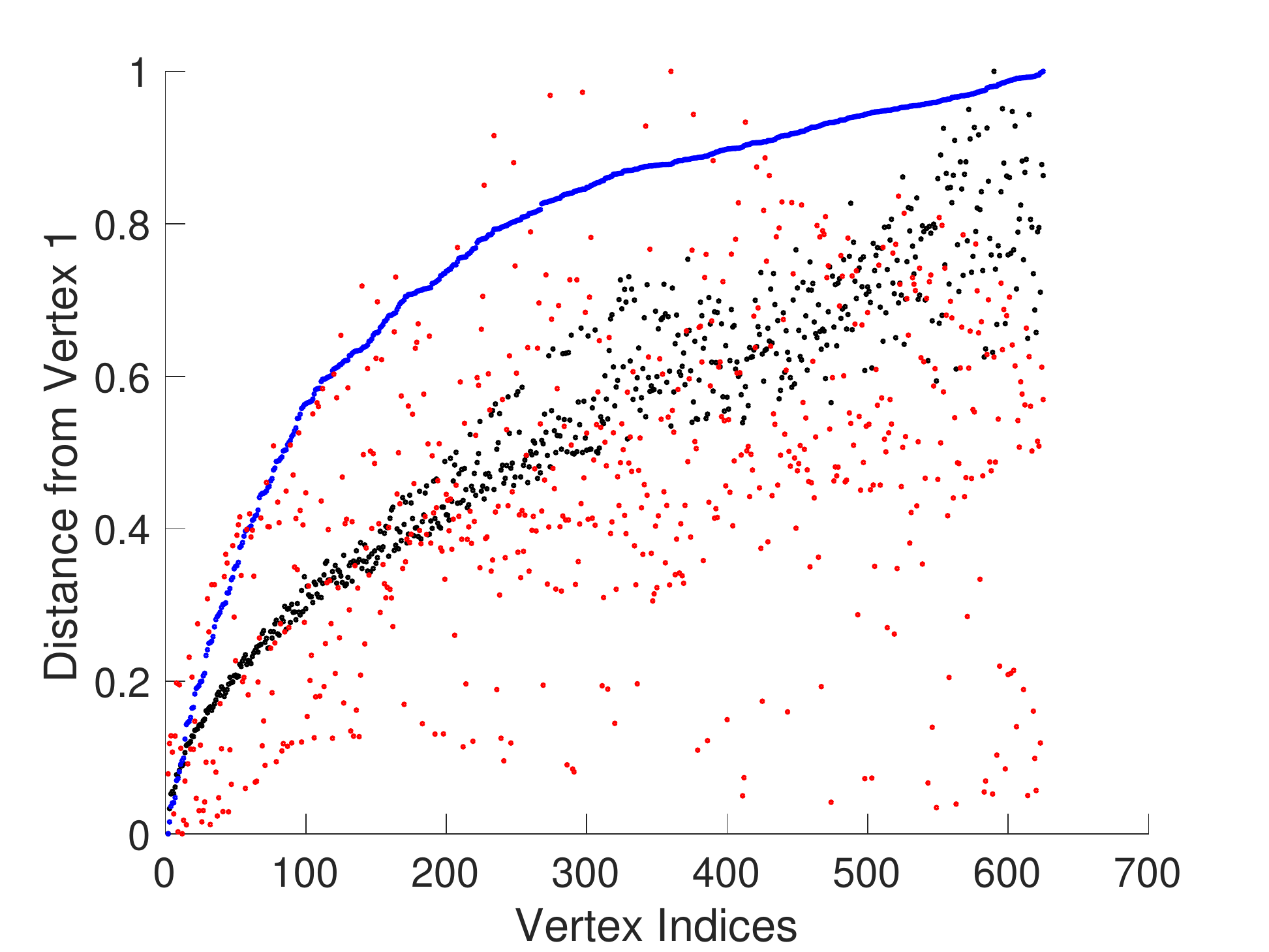}
    \caption{Flat Torus}
  \end{subfigure}\quad
  \begin{subfigure}[t]{0.3\textwidth}
    \centering
    \includegraphics[width=\textwidth]{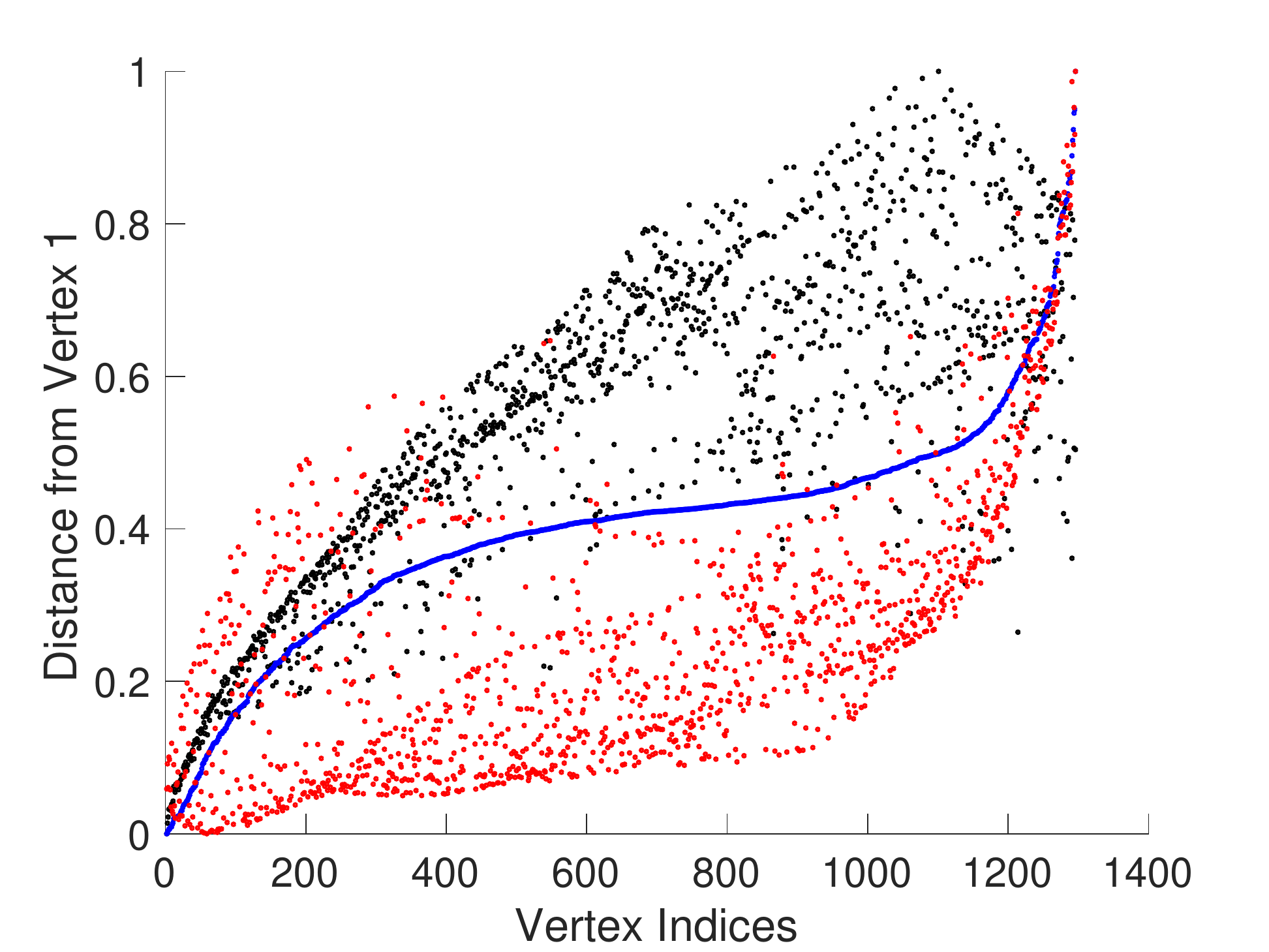}
    \caption{Flat Torus with a Hole}
  \end{subfigure}\quad
  \begin{subfigure}[t]{0.3\textwidth}
    \centering
    \includegraphics[width=\textwidth]{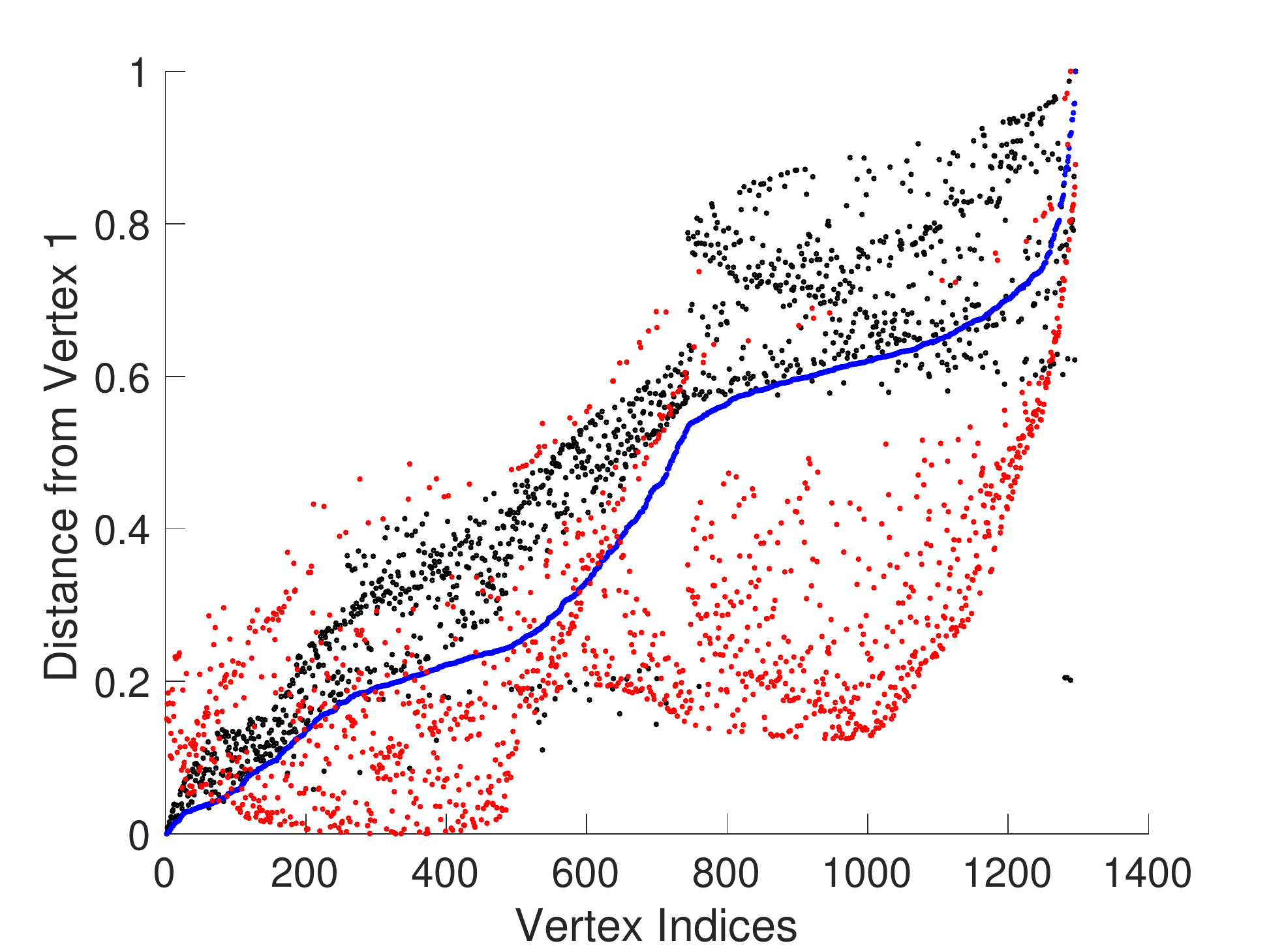}
    \caption{$H$ shaped domain}
  \end{subfigure}
  \\
  \begin{subfigure}[t]{0.3\textwidth}
    \centering
    \includegraphics[width=\textwidth]{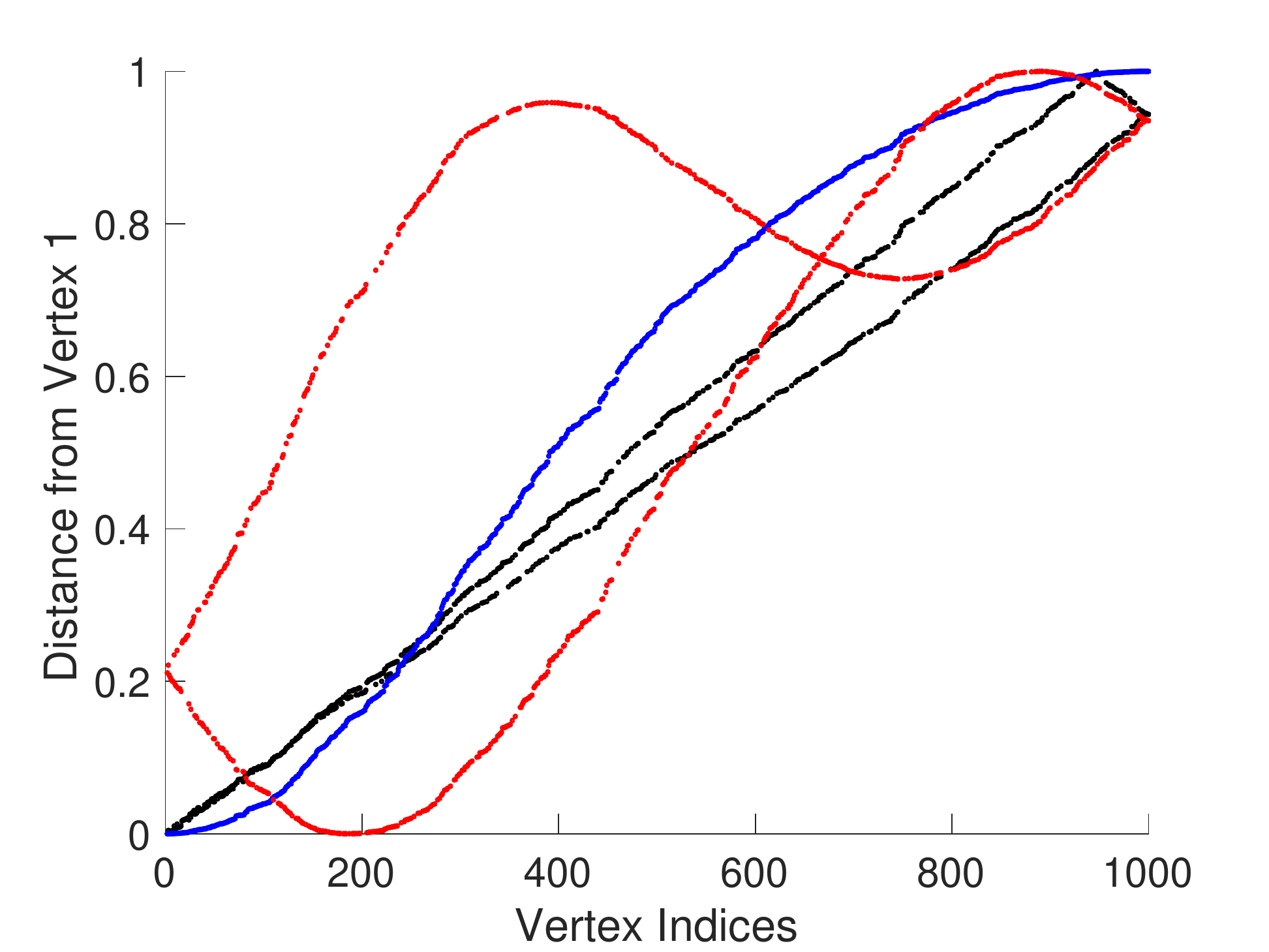}
    \caption{Circle}
  \end{subfigure}\quad
  \begin{subfigure}[t]{0.3\textwidth}
    \centering
    \includegraphics[width=\textwidth]{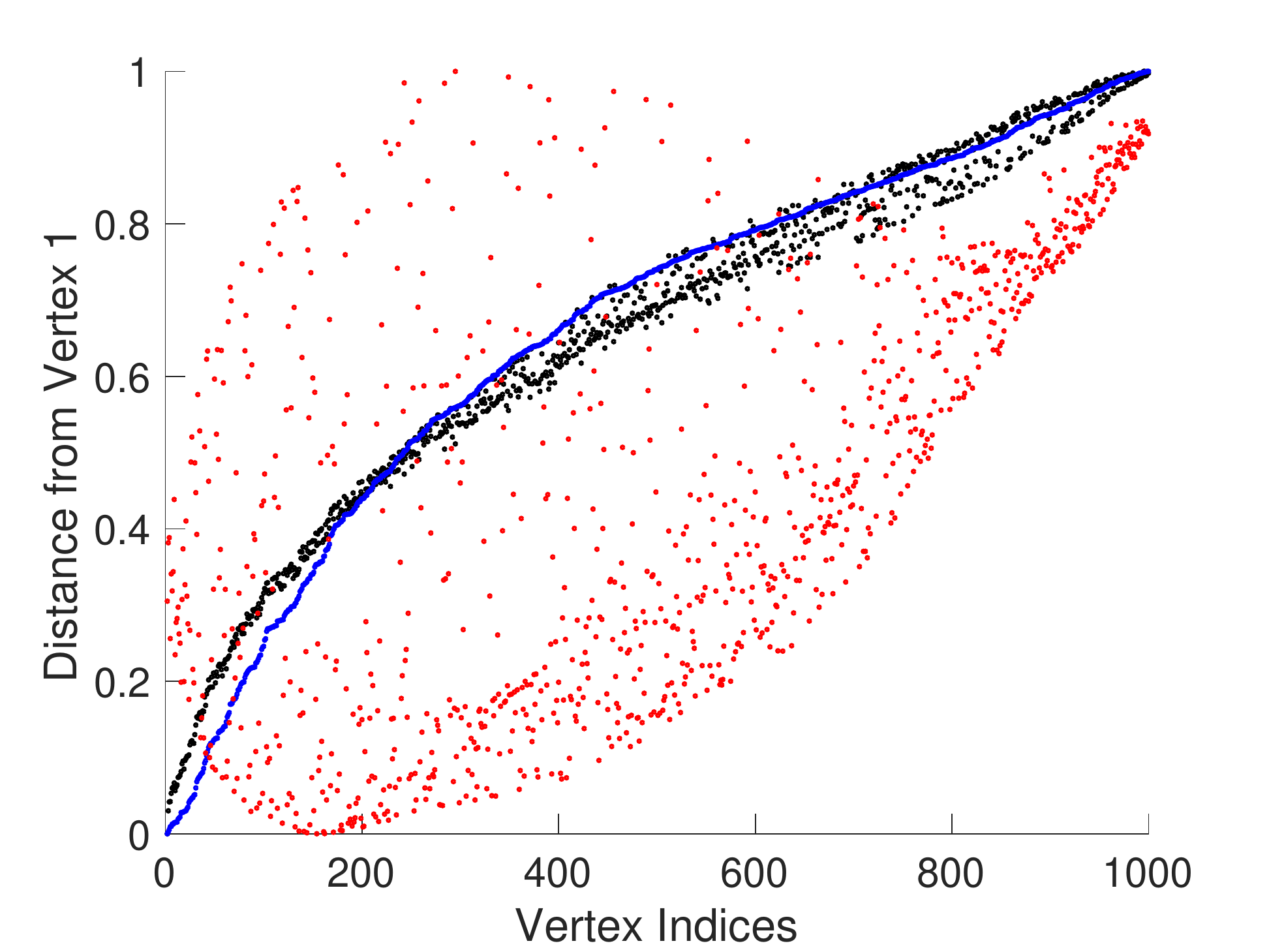}
    \caption{Sphere}
  \end{subfigure}\quad
  \begin{subfigure}[t]{0.3\textwidth}
    \centering
    \includegraphics[width=\textwidth]{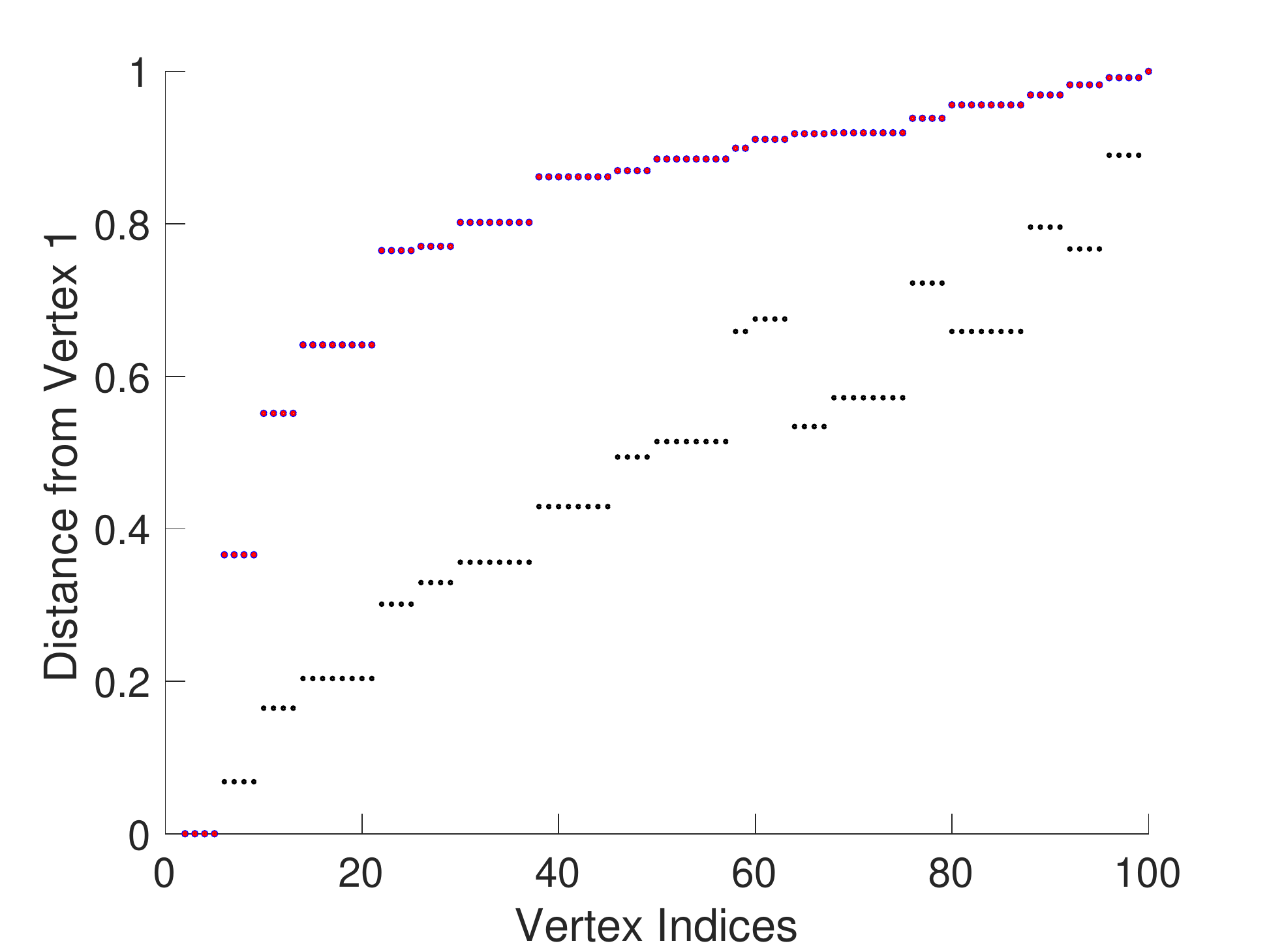}
    \caption{Square Lattice}
  \end{subfigure}  
  \caption{Normalized distance plots comparing the scaled distances from one node in a geometric graph computed using Euclidean distances (Black), $d^{\sfrac12}$ (Blue) and $d^1$ (Red). The geometric graphs from top left to bottom right are 
    {\bf (a)} A random point cloud on a flat torus 
    {\bf (b)} A random point cloud on a flat torus with a hole, 
    {\bf (c)} A random point cloud on an $H$ shaped domain, 
    {\bf (d)} A random point cloud on the circle, 
    {\bf (e)} A random point cloud on a sphere, 
    {\bf (f)} A square lattice on the flat torus.  
  Note, in all subplots, we have ordered the vertices from closest to farthest from a reference node given by the first vertex generated relative to the $d^{\sfrac12}$ metric.} 
  \label{fig:geodistance}
\end{figure}

\subsection{Real-world example: the New York City taxi network}%
\label{sec:taxi}%

Consider the movement over time of a New York City taxi, which we interpret as a Markov chain where the states are neighborhoods and $P_{i,j}$ is the probability that a trip begun in neighborhood $i$ ends in neighborhood $j$.
Using publicly available data from the New York City Taxi and Limousine Commission,\footnote{Accessed at~\url{https://www1.nyc.gov/site/tlc/about/tlc-trip-record-data.page} in April 2020.} 
we computed an adjacency matrix where $A_{i,j}$ is the number of Yellow Taxi trips in January 2019 that started at $i$ and ended at $j$, where $i$ and $j$ are chosen from 262 neighborhoods\footnote{Two additional neighborhoods are marked ``unknown'' and appear to designate out-of-city or out-of-state endpoints. We excluded these from our analysis.} spread across the city's five boroughs (Manhattan, Staten Island, Queens, Brooklyn, and the Bronx). We also included trips to and from Newark Liberty International Airport (EWR) in New Jersey. We restricted our analysis to the 250-neighborhood strongly connected component.

The data is dominated by degree, as shown in~\cref{nyc_heatmap}, with the busier Manhattan neighborhoods having tens of thousands of trips, and the Staten Island neighborhoods having median out-degree of 4. Traffic is also organized by borough, although the distinctions between the spatially adjacent Brooklyn, Queens, and Bronx boroughs are perhaps less apparent, and they might be properly considered as peripheral to the Manhattan core. Staten Island is notable for its remoteness, which is reflected in the sparsity of $A$ in that block.

In~\cref{nyc_heatmap}, we compare $A$ with $d^{\sfrac12}$ and $d^1$. While $d^{\sfrac12}$ highlights Manhattan in a manner similar to $A$, it does not distinguish much between Queens, Brooklyn, and the Bronx, showing them instead as a single interconnected group. In contrast, Staten Island is very clearly highlighted as its own, close group, which is reasonable given the geographic proximity of these neighborhoods and the fact that a disproportionately large number of trips involving Staten Island both started and ended there. Although the purpose of this example is not to provide an optimal clustering of the data, 
we note that Staten Island does represent a difficult cluster to detect, and arguably is not even a cluster, since there are only eight interior edges (counting multiplicity but excluding self-edges) and 309 incoming or outgoing edges, all of which is hidden in over 1 million edges (again, counting multiplicity).

Note that the fact that Staten Island is highlighted by $d^{\sfrac12}$ is not simply because of degree scaling, as a heat map of $P$ does not highlight Staten Island as a block. The true explanation seems to involve two factors: (1) Staten Island has eight non-diagonal in-edges 13 neighborhoods,\footnote{Staten Island has 20 neighborhoods, but 7 have 0 out-degree and are thus excluded from the strongly connected component.} and the median out-degree is 4. Thus, a taxi that does enter Staten Island has a relatively large likelihood of visiting another Staten Island location next, relative to taxis starting at other neighborhoods. (2) The average frequency of visiting Staten Island at all is so low that the pattern of visiting is almost memoryless, with taxis leaving Staten Island having plenty of time to mix in other areas before visiting Staten Island again, so that the probability of leaving Staten Island and then reaching another Staten Island location before returning to the first one is about $\frac12$, despite the low degree of Staten Island neighborhoods. In contrast, Staten Island is far from other locations, especially Manhattan, since by~\cref{d2sym}, mutually high hitting probabilities are required for closeness, but the probability of starting in a Manhattan neighborhood and reaching Staten Island before returning is very low. 

The distance $d^1$ places the Manhattan nodes close to most other nodes, especially each other, while the Staten Island nodes are far from everything, especially each other. Since $d^1_{i,j} = -\log(\phi_i)  - \log(Q_{i,j})$, this distance is small only when (1) $\phi_i$ is large and (2) $Q_{i,j}$ is far from zero. Thus, the Staten Island nodes, which have small values of $\phi$, cannot be close to anything, and the Manhattan nodes, which have the largest values of $\phi$, can be close to other nodes, depending on $Q_{i,j}$. Empirically, $Q_{i,j}$ is usually not very small, with 77\% of the entries in $Q$ being at least $0.1$, which explains Manhattan's overall closeness to other nodes. The fact that the Manhattan nodes are closer to each other than to other nodes is accounted for by the fact that $Q_{i,j}$ for $i$ in Manhattan is generally larger if $j$ is also in Manhattan, which might be expected. (The medians differ by a factor of $5.4$.) A similar observation explains why the Staten Island nodes are considered farther from each other than they are from nodes in the other boroughs.

Finally, we used $d^{\sfrac12}$ to perform PCA, with the first two principal components (PCs) visualized in~\cref{nyc_pca}. These two PCs explained 64\% and 34\% of the variation, respectively, with the first PC being closely related to out-degree (Pearson correlation with $\log k_{\mathrm{out}}$ is $.96$) and the second PC being well-correlated (Pearson correlation $.9978$) with the column means of $d^{\sfrac12}$. So over 98\% of the variance is explained by these two PCs. Interestingly, both the highest- and lowest-degree nodes were on average far from other nodes. Recalling that $d^{\sfrac12}_{i,j}$ is the negative log of the geometric mean of $Q_{i,j}$ and $Q_{j,i}$ (see~\cref{d2sym}), closeness requires that both of these factors be high. If $i$ is a high-degree core node, then $Q_{i,j}$ is small for most $j$. In contrast, a mid-degree peripheral node in Queens, the Bronx, or Brooklyn, enjoys reasonable values of $Q_{i,j}$ for other peripheral nodes $j$, since once a taxi enters the Manhattan core, it is likely to visit a significant portion of the other nodes before returning to $i$. Finally, if a node's degree is too small, the probability of a taxi reaching it at all is too small for the hitting probabilities to be high. For comparison, performing PCA directly on $A$ gives a similar first PC, with a different second PC that explains about half as much variance as the second PC of $d^{\sfrac12}$. The second PC is nearly constant, except on Manhattan, where it correlates with the East-West coordinate.

\begin{figure}[t] 
  \centering
  \footnotesize
  \begin{tabular}{@{\hskip 0in}c@{\hskip 0.18in}c@{\hskip 0.18in}c@{\hskip 0in}}
    \def\svgwidth{.31\linewidth}
\begingroup%
  \makeatletter%
  \providecommand\color[2][]{%
    \errmessage{(Inkscape) Color is used for the text in Inkscape, but the package 'color.sty' is not loaded}%
    \renewcommand\color[2][]{}%
  }%
  \providecommand\transparent[1]{%
    \errmessage{(Inkscape) Transparency is used (non-zero) for the text in Inkscape, but the package 'transparent.sty' is not loaded}%
    \renewcommand\transparent[1]{}%
  }%
  \providecommand\rotatebox[2]{#2}%
  \newcommand*\fsize{\dimexpr\f@size pt\relax}%
  \newcommand*\lineheight[1]{\fontsize{\fsize}{#1\fsize}\selectfont}%
  \ifx\svgwidth\undefined%
    \setlength{\unitlength}{321.48509216bp}%
    \ifx\svgscale\undefined%
      \relax%
    \else%
      \setlength{\unitlength}{\unitlength * \real{\svgscale}}%
    \fi%
  \else%
    \setlength{\unitlength}{\svgwidth}%
  \fi%
  \global\let\svgwidth\undefined%
  \global\let\svgscale\undefined%
  \makeatother%
  \begin{picture}(1,0.80976524)%
    \lineheight{1}%
    \setlength\tabcolsep{0pt}%
    \put(0,0){\includegraphics[width=\unitlength,page=1]{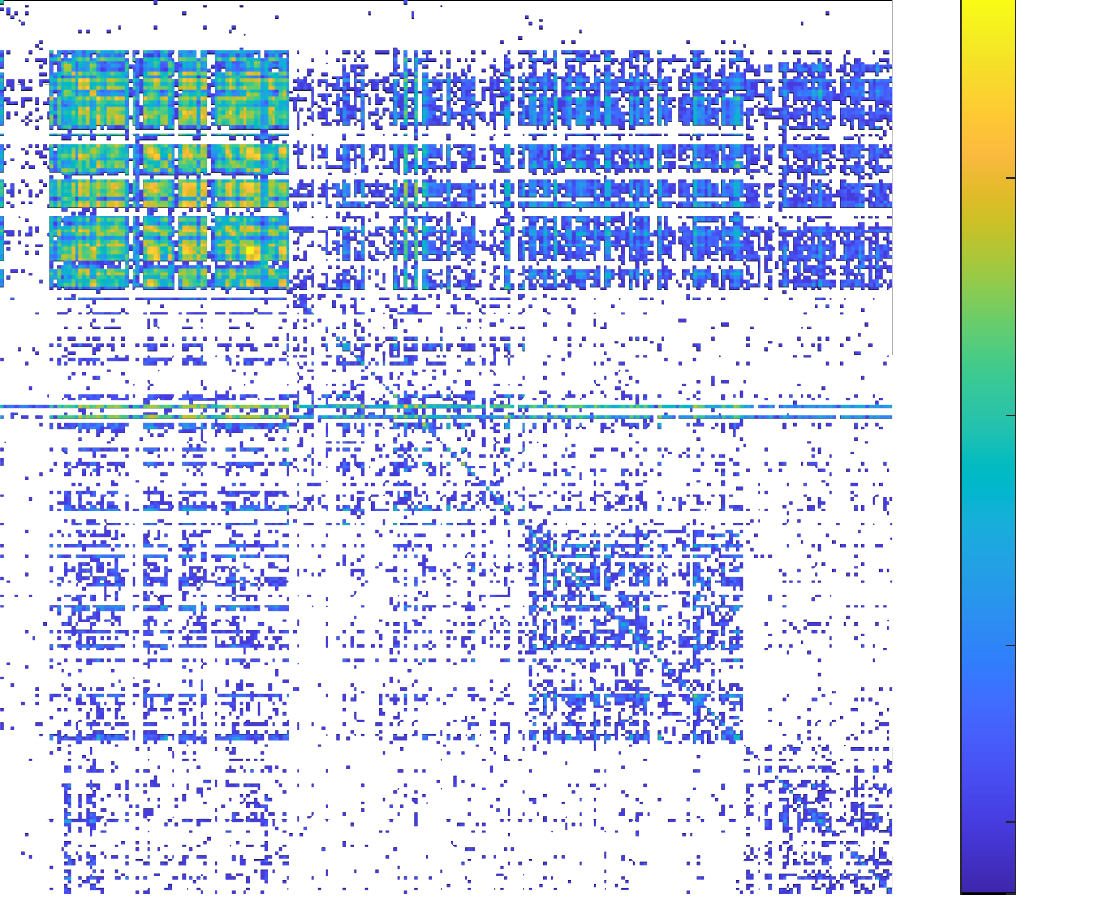}}%
    \put(0.92166023,0.00039641){\makebox(0,0)[lt]{\lineheight{1.25}\smash{\begin{tabular}[t]{l}0\end{tabular}}}}%
    \put(0.92166023,0.06479022){\makebox(0,0)[lt]{\lineheight{1.25}\smash{\begin{tabular}[t]{l}1\end{tabular}}}}%
    \put(0.92166023,0.22316281){\makebox(0,0)[lt]{\lineheight{1.25}\smash{\begin{tabular}[t]{l}10\end{tabular}}}}%
    \put(0.92166023,0.42914476){\makebox(0,0)[lt]{\lineheight{1.25}\smash{\begin{tabular}[t]{l}100\end{tabular}}}}%
    \put(0.92166023,0.64222509){\makebox(0,0)[lt]{\lineheight{1.25}\smash{\begin{tabular}[t]{l}1,000\end{tabular}}}}%
    \put(0,0){\includegraphics[width=\unitlength,page=2]{nyc_adjacency.pdf}}%
  \end{picture}%
\endgroup%
 &
    \def\svgwidth{.29\linewidth}
\begingroup%
  \makeatletter%
  \providecommand\color[2][]{%
    \errmessage{(Inkscape) Color is used for the text in Inkscape, but the package 'color.sty' is not loaded}%
    \renewcommand\color[2][]{}%
  }%
  \providecommand\transparent[1]{%
    \errmessage{(Inkscape) Transparency is used (non-zero) for the text in Inkscape, but the package 'transparent.sty' is not loaded}%
    \renewcommand\transparent[1]{}%
  }%
  \providecommand\rotatebox[2]{#2}%
  \newcommand*\fsize{\dimexpr\f@size pt\relax}%
  \newcommand*\lineheight[1]{\fontsize{\fsize}{#1\fsize}\selectfont}%
  \ifx\svgwidth\undefined%
    \setlength{\unitlength}{301.52073669bp}%
    \ifx\svgscale\undefined%
      \relax%
    \else%
      \setlength{\unitlength}{\unitlength * \real{\svgscale}}%
    \fi%
  \else%
    \setlength{\unitlength}{\svgwidth}%
  \fi%
  \global\let\svgwidth\undefined%
  \global\let\svgscale\undefined%
  \makeatother%
  \begin{picture}(1,0.85483346)%
    \lineheight{1}%
    \setlength\tabcolsep{0pt}%
    \put(0,0){\includegraphics[width=\unitlength,page=1]{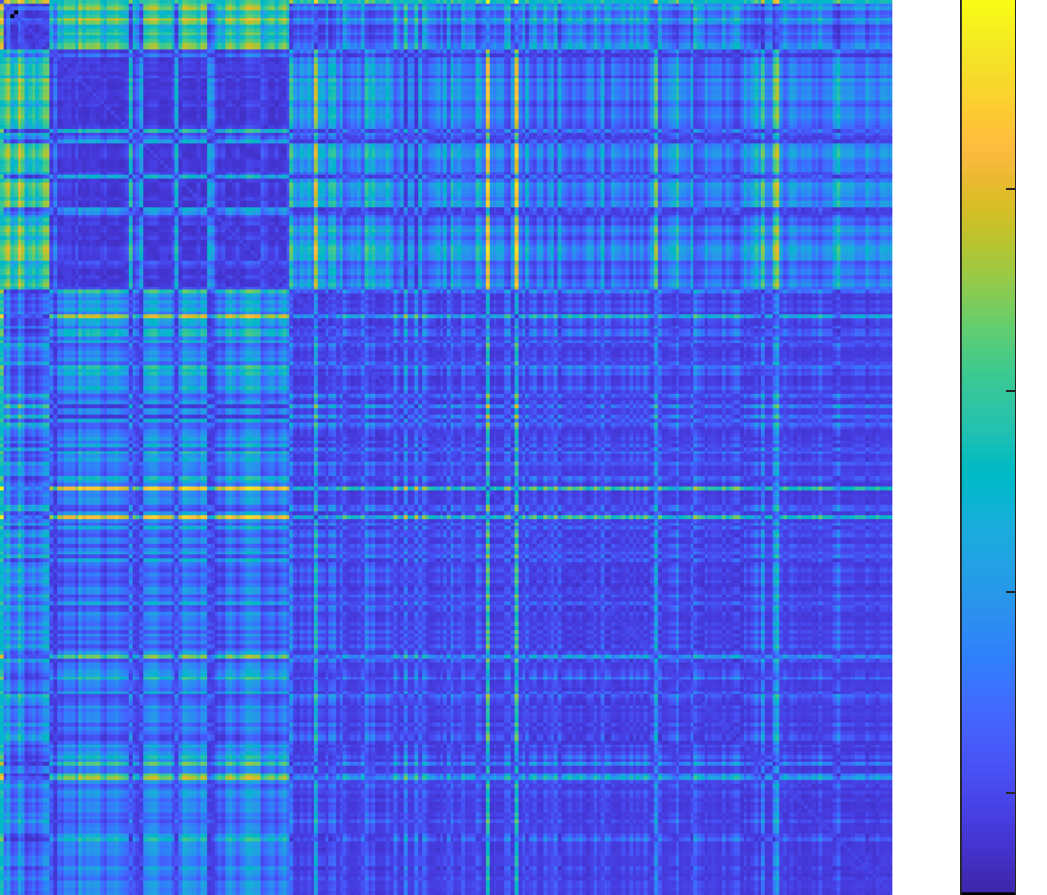}}%
    \put(0.9826854,0.08821069){\makebox(0,0)[lt]{\lineheight{1.25}\smash{\begin{tabular}[t]{l}1\end{tabular}}}}%
    \put(0.9826854,0.28048503){\makebox(0,0)[lt]{\lineheight{1.25}\smash{\begin{tabular}[t]{l}2\end{tabular}}}}%
    \put(0.9826854,0.47275937){\makebox(0,0)[lt]{\lineheight{1.25}\smash{\begin{tabular}[t]{l}3\end{tabular}}}}%
    \put(0.9826854,0.66503395){\makebox(0,0)[lt]{\lineheight{1.25}\smash{\begin{tabular}[t]{l}4\end{tabular}}}}%
    \put(0,0){\includegraphics[width=\unitlength,page=2]{nyc_distance.pdf}}%
  \end{picture}%
\endgroup%
 &
    \def\svgwidth{.30\linewidth}
\begingroup%
  \makeatletter%
  \providecommand\color[2][]{%
    \errmessage{(Inkscape) Color is used for the text in Inkscape, but the package 'color.sty' is not loaded}%
    \renewcommand\color[2][]{}%
  }%
  \providecommand\transparent[1]{%
    \errmessage{(Inkscape) Transparency is used (non-zero) for the text in Inkscape, but the package 'transparent.sty' is not loaded}%
    \renewcommand\transparent[1]{}%
  }%
  \providecommand\rotatebox[2]{#2}%
  \newcommand*\fsize{\dimexpr\f@size pt\relax}%
  \newcommand*\lineheight[1]{\fontsize{\fsize}{#1\fsize}\selectfont}%
  \ifx\svgwidth\undefined%
    \setlength{\unitlength}{307.15892029bp}%
    \ifx\svgscale\undefined%
      \relax%
    \else%
      \setlength{\unitlength}{\unitlength * \real{\svgscale}}%
    \fi%
  \else%
    \setlength{\unitlength}{\svgwidth}%
  \fi%
  \global\let\svgwidth\undefined%
  \global\let\svgscale\undefined%
  \makeatother%
  \begin{picture}(1,0.83914221)%
    \lineheight{1}%
    \setlength\tabcolsep{0pt}%
    \put(0,0){\includegraphics[width=\unitlength,page=1]{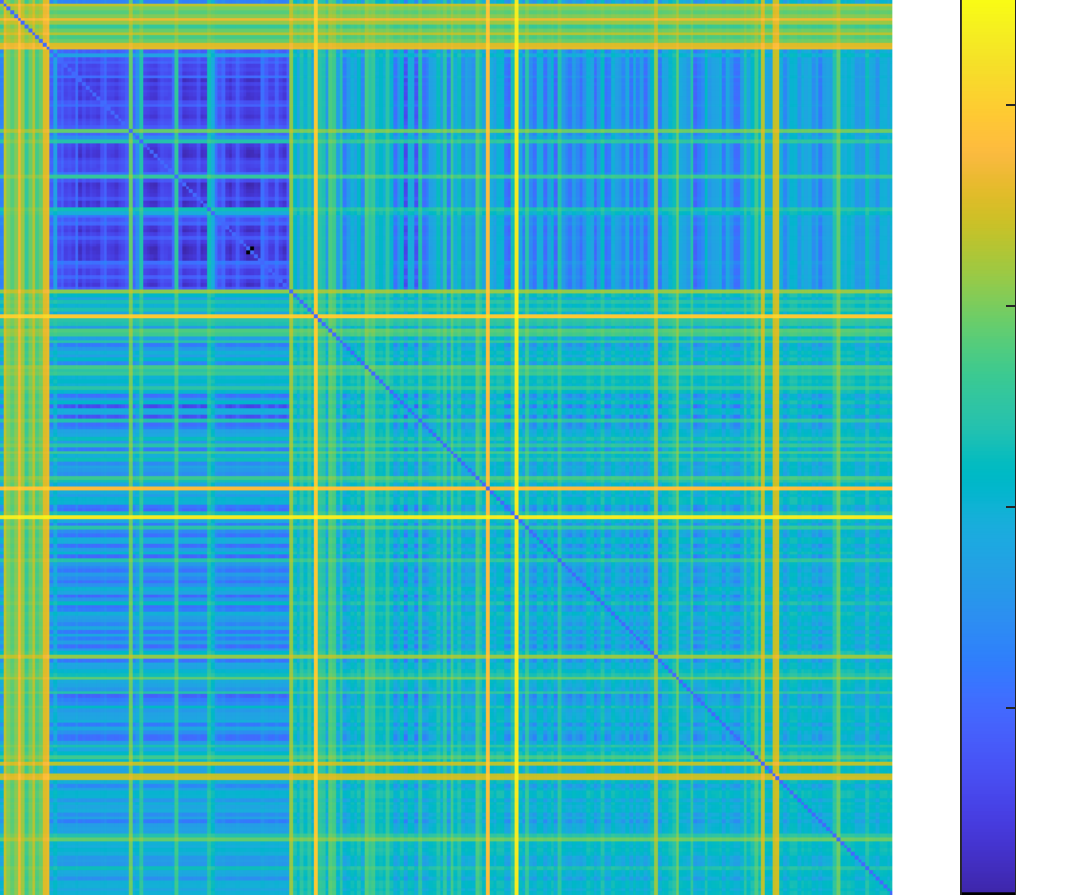}}%
    \put(0.96464731,0.16627135){\makebox(0,0)[lt]{\lineheight{1.25}\smash{\begin{tabular}[t]{l}6\end{tabular}}}}%
    \put(0.96464731,0.35471256){\makebox(0,0)[lt]{\lineheight{1.25}\smash{\begin{tabular}[t]{l}8\end{tabular}}}}%
    \put(0.96464731,0.54315353){\makebox(0,0)[lt]{\lineheight{1.25}\smash{\begin{tabular}[t]{l}10\end{tabular}}}}%
    \put(0.96464731,0.73159475){\makebox(0,0)[lt]{\lineheight{1.25}\smash{\begin{tabular}[t]{l}12\end{tabular}}}}%
    \put(0,0){\includegraphics[width=\unitlength,page=2]{nyc_d1.pdf}}%
  \end{picture}%
\endgroup%
\\
    $A$ & $d^{\sfrac12}$ & $d^1$
  \end{tabular}

  \caption{An example based on New York City taxi transit data, where nodes are 250 neighborhoods and $A_{i,j}$ is the number of trips from $i$ to $j$. (Left) Heatmap of $A$ with the nodes sorted by borough in this order: EWR airport (one node), Staten Island, Manhattan, Queens, Brooklyn, and the Bronx. The taxi traffic is dominated by the Manhattan block in the upper left, with Queens, Brooklyn, and the Bronx forming three blocks further down and to the right. (Middle) A similarly arranged heatmap of $d^{\sfrac12}$. The Manhattan neighborhoods are close together, and the smaller upper left block corresponding to Staten Island is distinguished as a coherent submodule, despite having only  8 interior edges. Queens, Brooklyn, and the Bronx form a large block in the lower right.  See~\cref{sec:taxi} for an explanation of these differences.  (Right) A heatmap of $d^{1}$.  We observe that in this normalization Staten Island is quite far from everything, including itself. Manhattan is relatively close to almost everything, especially itself. This is exactly what we should expect because $d^1_{i,j}$ is small only when both $\phi_i$ and the $i\rightarrow j$ hitting probability are large. [In the right two heatmaps, the diagonal is set to a non-zero value to improve contrast.]}%
  \label{nyc_heatmap}
\end{figure}

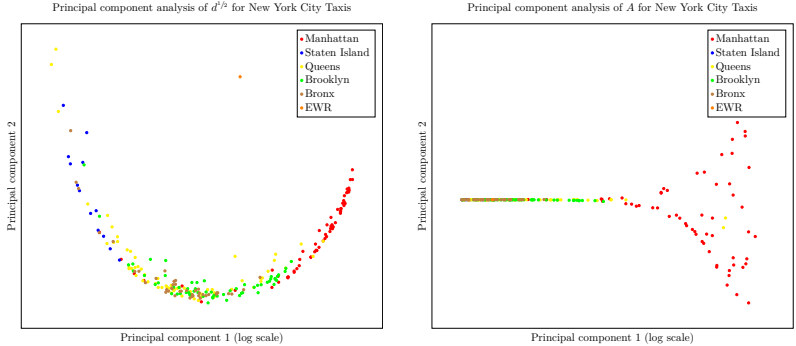
\begin{figure}[t]
  \centering
  \begin{tabular}{cc}
    \begin{tikzpicture}[scale=0.42]
      \begin{axis}[
	  cycle list={
	    {red,mark=*},
	    {blue,mark=*},
	    {yellow,mark=*},
	    {green,mark=*},
	    {brown,mark=*},
	    {orange,mark=*}
	  },
	  legend entries={Manhattan, Staten Island, Queens, Brooklyn, Bronx,EWR},
	  legend cell align=left,
	  only marks,
	  mark size=1pt,
	  width=\textwidth,
	  ticks=none,
	  xlabel={Principal component 1 (log scale)},
	  ylabel={Principal component 2},
	  title={Principal component analysis of $d^{\sfrac12}$ for New York City Taxis},
	]
	\addplot table [col sep=comma] {pca3.dat};
	\addplot table [col sep=comma] {pca2.dat};
	\addplot table [col sep=comma] {pca4.dat};
	\addplot table [col sep=comma] {pca5.dat};
	\addplot table [col sep=comma] {pca6.dat};
	\addplot table [col sep=comma] {pca1.dat};
      \end{axis}
    \end{tikzpicture}
    &
    \begin{tikzpicture}[scale=0.42]
      \begin{axis}[
	  cycle list={
	    {red,mark=*},
	    {blue,mark=*},
	    {yellow,mark=*},
	    {green,mark=*},
	    {brown,mark=*},
	    {orange,mark=*}
	  },
	  legend entries={Manhattan, Staten Island, Queens, Brooklyn, Bronx,EWR},
	  legend cell align=left,
	  only marks,
	  mark size=1pt,
	  width=\textwidth,
	  ticks=none,
	  xlabel={Principal component 1 (log scale)},
	  ylabel={Principal component 2},
	  title={Principal component analysis of $A$ for New York City Taxis},
	]
	\addplot table [col sep=comma] {pcaA3.dat};
	\addplot table [col sep=comma] {pcaA2.dat};
	\addplot table [col sep=comma] {pcaA4.dat};
	\addplot table [col sep=comma] {pcaA5.dat};
	\addplot table [col sep=comma] {pcaA6.dat};
	\addplot table [col sep=comma] {pcaA1.dat};
      \end{axis}
    \end{tikzpicture}
  \end{tabular}
  \caption{(Left) These two PCs explain $98.3\%$ of the variance. The first PC has a $.96$ correlation coefficient with $\log k_{\mathrm{out}}$, and the second PC as a correlation coefficient of $.9978$ with the column means of $d^{\sfrac12}$, which we interpret as the average distance to other nodes. Notably, the highest-degree nodes also have high average distance to other nodes. This is also true of the lowest-degree nodes, while the mid-degree nodes in Queens, Brooklyn, and the Bronx are closer to other nodes on average. We interpret this by noting that, while high-degree nodes are common endpoints for trips, (so $Q_{i,j}$ might be high when $j$ is a high-degree node), they have a lot of self-loops, and the taxis that leave them tend to return relatively quickly (so $Q_{j,i}$ is low for most $i$). Using~\cref{d2sym}, we see that $d^{\sfrac12}_{i,j}$ will then not be very small for high-degree $i$. The mid-degree nodes, in contrast, send a lot of taxis into the Manhattan core, which are likely to mix through the city for a long time before returning (so $Q_{i,j}$ is not very small for almost all destinations $j$). (Right) PCA on $A$ gives a similar first PC. The second PC is nearly constant except on Manhattan, where it is correlated with the East-West coordinate (Pearson .42, p=.0004). The second PC explains about half as much variance for $A$ as for $d^{\sfrac12}$.}%
  \label{nyc_pca}
\end{figure}

\section{Conclusion}\label{s:Disc}
Given a probability transition matrix for an ergodic, finite-state, time-homogeneous Markov chain, we have constructed a family of (possibly pseudo-)metrics on the state space, which we refer to as hitting probability distances. Alternatively, this construction gives a metric on the nodes of a strongly connected, directed graph. In the cases where we do not obtain a proper metric, the degeneracies give global structural information, and we can quotient them away. Our metrics can be computed in $O(n^3)$ time and $O(n^2)$ space, in one example scaling up $10,000$ nodes and $\approx 38M$ edges on a desktop computer. Our metric captures different information compared to other directed graph metrics and captures multiscale structure in the taxi example. We have considered the utility of this metric for structure detection, dimension reduction, and visualization, finding in each case advantages of our method compared to existing techniques.

Some other possible applications include efficient nearest-neighbor search, new notions of graph curvature~\cite{van_Gennip_2014}, Cheeger inequalities, and provable optimality of weak recovery for dense, directed communities.  Additionally, in our experiments, we observed that several eigenvalues of the symmetrized adjacency matrix contained useful information about structure such as cycles, and it would be good to understand better which structures get encoded in leading eigenspaces. Empirically, it is important to know how commonly $d^{\sfrac12}$ is degenerate, and what useful structure is revealed in practice.
A natural theoretical question is consistency of the distances in the large graph limit as we approach a natural geometric object embedded in a standard Euclidean space~\cite{Osting_2017,singer2012vector,singer2017spectral,trillos2018variational,trillos2016consistency,Yuan2020}.

In terms of possible improvements to our method, an effective means of thresholding the symmetrized hitting probability matrix could improve scalability.  A natural question to pursue in a variety of settings would be the sparsification of $\Ahtb$ and its implications for spectral analysis and clustering applications.  In particular, the potentially sparse $P$ will map into a full (but symmetric) matrix $\Ahtb$.  
In large systems the $O(n^2)$ storage requirement may become a burden.  Hence, it is natural to ask: If we sparsify the $\Ahtb$ matrix to have a comparable number of edges to that of the original $P$, how much information can be stably preserved in the spectrum?  This will be a topic of future work on the hitting probability matrices we have constructed.

\bibliographystyle{siamplain}
\bibliography{refs}

\end{document}